\newcommand{\eq}{\triangleq}
\newcommand{\field}[1]{\mathbb{#1}}
\newcommand{\R}{\field{R}}
\newcommand{\N}{\field{N}}
\newcommand{\Prob}{\mathbf{Pr}}
\newcommand{\E}{\mathbf{E}}
\newcommand{\hfs}{\hfill\ensuremath{\square}}
\DeclareMathOperator{\tr}{tr}
\DeclareMathOperator{\eigs}{eigs}
\DeclareMathOperator{\diag}{diag}
\newtheorem{ex}{Example}
\newtheorem{rem}{Remark}
\newtheorem{thm}{Theorem} 
\newtheorem{lem}{Lemma}
\title{Power Control and Coding Formulation for State Estimation with Wireless
  Sensors} 
\author{
  Daniel~E.~Quevedo*,~\IEEEmembership{Member,~IEEE,} Jan
  {\O}stergaard,~\IEEEmembership{Senior Member,~IEEE,} and
   Anders~Ahl\'en,~\IEEEmembership{Senior Member,~IEEE}
\thanks{Daniel Quevedo is
  with the School of Electrical Engineering \&
  Computer Science, The University of Newcastle, Australia; dquevedo@ieee.org. Jan  {\O}stergaard is with the Department of Electronic Systems, Aalborg
  University, Denmark; janoe@ieee.org.    
  Anders~Ahl\'en is with Signals and Systems, Uppsala University, Sweden;
  Anders.Ahlen@signal.uu.se.} 
\thanks{This research was supported under Australian Research Council's
  Discovery Projects funding scheme (DP0988601).}}
\begin{document}
%

\maketitle

\begin{abstract}
Technological advances have made wireless sensors cheap and reliable
enough to be brought into industrial use. A major challenge arises from the fact
that wireless channels  introduce random
packet dropouts. Power control and coding are key enabling technologies in wireless
communications to ensure efficient communications. In the present work, we
examine the role of power control and coding for Kalman filtering over
wireless correlated channels.  Two estimation architectures are
  considered: In the first, the 
sensors send their measurements directly to a single 
gateway. In the second scheme,  wireless relay nodes provide
additional
links. The gateway decides   on the  coding scheme and the  transmitter
power levels of the wireless nodes. The decision process is carried out on-line and adapts to varying
channel conditions in order to improve the trade-off between state
estimation accuracy and  energy expenditure. In combination with
predictive power control, we investigate the use of multiple-description coding,
zero-error coding and network coding and  provide sufficient conditions for the
expectation of the estimation error covariance matrix to be 
bounded. Numerical results 
suggest that the proposed method may lead to energy savings of around 50\%, when
compared to an alternative scheme, wherein transmission power levels and
bit-rates are governed by simple logic. In particular, zero-error  
coding is 
preferable at  time instances with high channel gains, whereas
multiple-description coding is superior for time instances with low
gains.  When channels between the sensors and the gateway are in
  deep fades, network coding improves estimation accuracy  significantly
  without sacrificing energy efficiency.
\end{abstract}
\begin{keywords}
wireless sensors, Kalman filtering, power control, multiple-description coding,
distributed source coding, network coding, relays
\end{keywords}

\section{Introduction}
\label{sec:intro}
Wireless sensors (WSs) have become  an important
alternative to wired sensors~\cite{ilymah04,ghakum03,willig08b}. WSs  
are equipped with a sensing 
component (to measure e.g., temperature), a 
processing device (to perform simple computations on the measured raw data), and
a communication device. WSs are cheap and reliable and offer several advantages,
such as, flexibility, low cost, and fast deployment. In addition, with  
WSs electrical contact problems are no longer an
issue. Furthermore, WSs and 
actuators can be placed where wires cannot go, or where power sockets are
unavailable. 

\par One major drawback of using WSs is that  wireless communication channels are subject to
fading and interference,  causing random packet errors\cite{goldsm05}.  The
time-variability of the fading channel can be alleviated by adjusting the power
levels and the transmitted packet lengths \cite{hantse99,gungus03}. To
keep packet error rates low, short 
packet lengths and high transmission power should be used. However, the use of
high transmission power is rarely an option, since in most applications WSs are
expected to be operational for several years without the replacement of
batteries; cf.,\cite{johbjo07}.  In addition,  short packets may require coarse
quantization which may lead to large 
quantization effects unless careful coding is
used\cite{jaynol84,covtho06}. It is safe to assume that, as in other wireless communication applications, power control
and coding will become key enabling technologies whenever WSs are
used. In particular, due to their wide applicability, including nonlinear
constrained MIMO systems (see,
e.g.,\cite{qiliu11a,versun11a,alegag11a,caiyan12a,queagu12a} for recent application studies), the
use of predictive control   
methods is worth investigating. 

\par In this work, we study two 
architectures having $M$ WSs and a single gateway 
(GW) for  Kalman-filter based state estimation of   linear time-invariant (LTI)
systems of the form:
\begin{equation}\label{eq:xk}
x(k+1) = Ax(k) + w(k), \quad k\in \N_0,
\end{equation}
where   $x(0) \in \mathbb{R}^n$ is zero-mean Gaussian distributed with covariance matrix $P_0$
and the driving noise process $\{w(k)\}_{k\in\N_0}$ is independent and identically
distributed (i.i.d.) zero-mean Gaussian distributed with
covariance matrix $Q$. 
The  measurement of sensor $m$, at time $k$, is given by
\begin{equation}\label{eq:yk}
y_m(k) = C_mx(k) + v_m(k), \quad m = \{1,\dotsc,M\}, 
\end{equation}
where $\{v_m(k)\}$ is i.i.d.\ zero-mean Gaussian measurement noise with
covariance matrix $R_m$.

\par  The first estimation architecture examined is depicted in
  Fig.~\ref{fig:scheme}  for the particular case of having $M= 2$ WSs. The measurements given by~(\ref{eq:yk}) are encoded and
transmitted at an appropriate power level over a fading channel
(generating random packet loss) to the GW. Received packets
are then used to  estimate $x(k)$  by means of a time-varying
Kalman filter (KF) which takes into account packet loss. As depicted in
Fig.~\ref{fig:scheme}, in addition to performing state estimation, the GW also
controls the power levels and the coding method (including bit-rate) used by the sensors at each
time. One of the main purposes of the present work is to show how predictive control
methods can be used for this purpose.  To keep
the sensors simple and energy efficient, the sensor nodes are not
allowed to communicate with each other. Thus, joint encoding of
the measurements taken by different sensors is not possible.
However, in the case where several measurements are received by
the GW, joint decoding is possible. By allowing separate
encoding to be followed by joint decoding, it is possible to take
advantage of distributed source coding techniques~\cite{slewol73,wynziv76}. In
the present work, we will focus on a 
particular distributed source coding technique known as zero-error
coding (ZEC)~\cite{koutun03}. In addition, to achieve robustness
in the presence of packet loss, we allow the sensors to use
multiple-description coding (MDC)~\cite{gamcov82}.

\begin{figure}[t]
  \centering
    \input{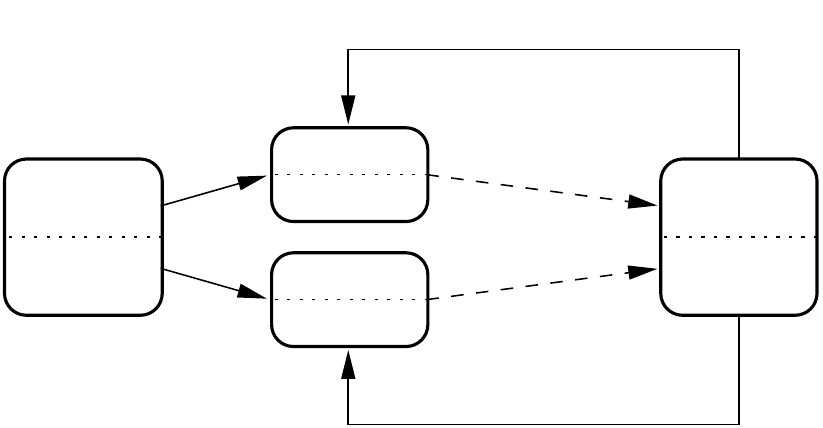_t}
    \caption{State Estimation with $M=2$  wireless sensors.  The
        dashed lines denote fading channels which 
        introduce  
      random transmission errors. The gateway performs state estimation. It also
      controls the power level updates, $\delta u_m(k)$, and the coding method,
      as described in the codebook index $I_m(k)$.}
    \label{fig:scheme}
  \end{figure}

\par  In the second estimation architecture studied, the
  incorporation of $L$ relay
  nodes allows for additional communication links, see
  Fig.~\ref{fig:scheme_relay}. 
 Here, the measurements in~(\ref{eq:yk}) are quantized (with a uniform 
quantizer)  and
transmitted at an appropriate power level over  fading channels
 to the GW and relays. The latter perform
network coding and forward processed sensor measurements whenever appropriate to the GW.
To avoid interference between nodes, the communication channel is accessed in a TDMA
fashion with a pre-designed protocol.
 At the GW, received packets from the sensors and relays
are then used to  estimate $x(k)$ via Kalman filtering. For this second
architecture, the sensors do not perform MDC or ZEC. Thus, the codebook indices
$I_m(k)$ amount to the bit-rates to be used by the sensors.

 \par The main contribution of the present work is to investigate the role of
 dynamic power
control and   coding for state estimation with WSs through use of
 predictive control. 
The objective of the controller is to  counteract channel
  variability and to trade-off battery use for estimation accuracy.
It is located at the GW and decides upon the transmission power
level and coding scheme to 
be used by each node.
Our results indicate that it is advantageous that power levels
approximately invert channel gains provided sufficient power is available and that  MDC be
used at the
sensors when the channel conditions are poor.  
When good channel conditions are expected, it pays off to use ZEC across the sensors.
  If relays are available, then it turns out that, when channels
  between sensors and the GW are subject to severe fading, the use of network
  coding will improve the estimation performance significantly without
  increasing energy expenditure compared to the case with no relays. Hence,
  network coding is an attractive alternative/complement to MDC.

\begin{figure}[t]
  \centering
    \input{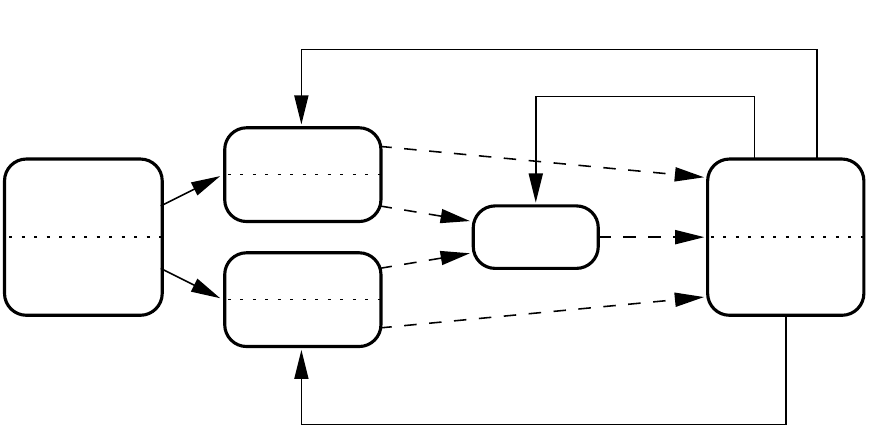_t}
   \caption{State Estimation with two wireless sensors and $L=1$
        relay 
      node. Here the GW calculates $\hat{x}(k)$, the power level updates of the
      sensors and relays, and the bit-rate, $I_m(k)$, of each sensor node.}
    \label{fig:scheme_relay}
\end{figure}

\par  The present work extends our  recent work documented in
\cite{queahl08a,queahl09,queahl10,ostque09,ostque10}. The papers
\cite{queahl08a,queahl09} introduced the idea of using predictive control of WS
power levels for dynamic state estimation and control
applications. In\cite{queahl10}, the combination 
of power control and ZEC 
was considered, our  conference contribution \cite{ostque10}
  examined network coding for architectures with relays, whereas
in\cite{ostque09}, the combination of power control 
and MDC was considered. The results reported in
\cite{queahl10,ostque09,ostque10} 
indicate  that with simple coding techniques, significant improvements over the
uncoded case \cite{queahl08a} can be achieved. This motivates the present paper,
which combines power control, with either ZEC  and MDC, or network coding. 


\paragraph*{Notation}
\label{sec:notation}
The trace of a matrix $A$ is denoted by $\tr A$, and its spectral norm  by
$||A||\eq \sqrt{\max \eigs (A^TA)}$, where $\eigs (A^TA)$ are the eigenvalues
of $A^TA$ and the superscript $T$ refers to transposition. 
%
The Euclidean norm of a vector $x$ is denoted $|x|$;
$\Prob\{\cdot\}$ refers to probability, and $\E\{\cdot\}$  to expectation.
Discrete entropy  is denoted $H(\cdot)$; for   differential entropy we use
 $h(\cdot)$. 

\section{Coding Aspects}
In this section, we revise some basic aspects
on  source coding. 
Throughout this work, we will use  standard
high-resolution source coding results; see, e.g.,~\cite{gergra92}.

\subsection{Scalar Quantization, Entropy Coding, and High-Resolution Source Coding}\label{sec:indp_coding}
Each sensor $m$ encodes its measurement $y_m(k)\in \mathbb{R}$ into a quantized version $\hat{y}_m(k)$, which is further represented by a sequence of bits $s_m(k)$ to be transmitted over the channel, see Fig.~\ref{fig:coder}. The average bit-rate of $s_m(k)$ is denoted $b_m(k)$. The encoder consists of a  (time-varying) uniform scalar quantizer $\mathcal{Q}_m$ having step-size $\Delta_m(k)$, which is followed by an entropy encoder $\mathcal{E}_m$.  A scalar uniform quantizer can be efficiently implemented by simply  scaling $y_m(k)$ by $\Delta_m(k)$ followed by rounding,
i.e., by forming $\lfloor y_m(k)/\Delta_m(k)\rceil$ where $\lfloor\cdot\rceil$ denotes rounding to the nearest integer. 
If the GW receives $s_m(k)$, it reconstructs $\hat{y}_m(k)$ by simply applying the inverse
scaling,
\begin{equation*}
\hat{y}_m(k)=  \lfloor y_m(k)/\Delta_m(k)\rceil \Delta_m(k).
\end{equation*}

Under high-resolution assumptions, the bit-rate is given by\footnote{The approximation becomes exact in the limit as the distortion tends to zero~\cite{gergra92}. However, it is also known that these high-resolution results
 are approximately true even at rates as low as 2 bit/dimension; cf.~\cite{goyal00}.}
\begin{equation}
  \label{eq:rate}
  b_m(k)\approx H(\hat{y}_m(k)) \approx h(y_m(k)) - \log_2(\Delta_m(k)),
\end{equation}
where $\Delta_m(k)$ denotes the step-size and the expected distortion $D_m(k)$ satisfies
\begin{equation}
  \label{eq:9}
  \begin{split}
    D_m(k)&\eq \E\big\{|y_m(k)-\hat{y}_m(k)|^2\,\big|\,b_m(k)=b\big\} \\
    &\quad \approx
    \frac{1}{12}2^{-2(b-h(y_m(k))}.
  \end{split}
\end{equation}
In the sequel, we assume that $\{y_m(k)\}_{k\in\N_0}$ is a
stationary process with $y_m(k)$ being zero-mean Gaussian with
variance $\sigma_{y_m}^2$.

\begin{figure}[t!]
  \centering
    \input{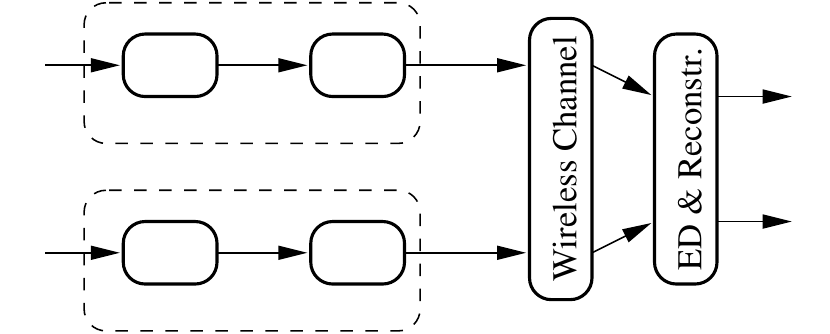_t}
    \caption{Coding with $M=2$ WSs. Measurements $y_1(k)$ and
      $y_2(k)$ 
      are quantized, entropy coded and transmitted over  fading  
 channels. At the receiver, entropy decoding (ED) and reconstruction 
 yields $\hat{y}_1(k)$ and $\hat{y}_2(k)$.}
    \label{fig:coder}
\end{figure}

The entropy coder consists of a codebook\footnote{Entropy coding can be done by a simple table-lookup since the rounding (quantization) operation  directly gives the index of the codeword in the table.},
which due to memory considerations cannot be arbitrarily large.
In practice, we choose the size of the codebooks so that the probability of falling outside the support of the
entropy coder is very small and the impact of the outliers on the distortion is  negligible. Since   a codebook is needed for every possible  
$\Delta_m(k)$,  it is necessary  to discretize the alphabet of $\Delta_m(k)$ or,
equivalently, to discretize
the set of possible bit-rates $b_m(k)$.
In the following, we will assume that an appropriate discretization for a given
system $(A,C)$ is
found offline through, e.g., computer simulations. Thus, the controller uses the
constraint 
\begin{equation}
  \label{eq:4}
  b_m(k) \in\mathcal{B}_m,\quad  \forall m\in\{1,2\dots, M\}
\end{equation}
for given finite sets $\mathcal{B}_m\subset (0,\infty)$.
In particular, in our results documented in Section~\ref{sec:simulations}, we
confined the bit-rates 
$b_m(k)$ to the sets  $\mathcal{B}_m=\{3,\dotsc,8\}$.

\subsection{Zero-Error Coding}
\label{sec:zec}
As  mentioned in the introduction, the $M$ WSs are separated and cannot communicate with each other.
Encoding of the measurements can therefore not be done jointly.
However, the GW sees all the received measurements and thereby can  perform centralized joint decoding.
Thus, we are facing the \emph{distributed source coding} problem, i.e.,
separate encoding of $M$ correlated variables followed by joint
decoding~\cite{slewol73,wynziv76,covtho06}. In this work, the GW will, at times, 
command the WSs to adopt a
distributed source coding technique, known as zero-error coding (ZEC) \cite{koutun03}.
With ZEC, the measurements are quantized independently using the same scalar quantizers as previously
designed for the case of independent coding.
The only change is with regard to the entropy coder: rather than employing
independent entropy coding on the quantized measurements, with ZEC the WSs use dependent
entropy coders. More specifically,
they   adopt an asymmetric strategy, where one \emph{dominant} sensor, say sensor
$m^\star$, performs independent coding, 
i.e., independent scalar quantization followed by independent entropy coding. Hereafter,
another sensor, say sensor $m$, performs independent scalar quantization
followed by entropy coding  with respect to the entropy code of sensor $m$. With
this strategy, if the 
GW receives both $s_{m^\star}(k)$ and $s_m(k)$, then  it is possible to reconstruct $\hat{y}_{m^\star}(k)$ and
$\hat{y}_m(k)$. If only $s_{m^\star}(k)$ is received, then the GW can still
obtain $\hat{y}_{m^\star}(k)$, but of 
course not $\hat{y}_m(k)$. However, if only $s_m(k)$ is received, then the
GW cannot reconstruct neither 
$\hat{y}_{m^\star}(k)$ nor $\hat{y}_m(k)$.

\subsection{Multiple-Description Coding}
\label{sec:mdc}

The idea behind MDC is to create separate descriptions, which are individually
capable of
reproducing a source to a specified accuracy and, when combined, are able to
refine each other \cite{gamcov82}.  For that purpose, when using MDC, the source vector $y_m(k)$ is mapped to $J_m(k)$
descriptions
\begin{equation*}
s_m^i(k),\quad i\in\{1,\dots,J_m(k)\},
\end{equation*}
which are independently entropy coded and transmitted separately
to the GW. 

\par In this work, we will consider MDC based on index-assignments
 and lattice vector quantization~\cite{vaisha93,ostjen06,ostque09}.
 We will assume that for any $y_m(k)$, the packet-loss
probabilities for the $J_m(k)$
descriptions are i.i.d.\ and equal. Furthermore, we will focus on the symmetric
situation where the bit-rates  of each   description formed
at the $m$th sensor are equal, given by
$b_m(k)/J_m(k)$, and where the distortion observed at the
GW depends only upon the number of received descriptions
and not on which descriptions are received.

\subsection{XOR-based Network Coding}
\label{sec:proc-at-relays} 
In the second estimation architecture
under study relays are used to enhance estimation
performance. In this setup, sensor
data is sent by using simple independent coding, as described in
Section~\ref{sec:indp_coding}. The relays act
as intermediate network nodes and are able 
to perform simple XOR-based network coding on the data\cite{ahlcai00}. 
As
illustrated in 
Fig.~\ref{fig:scheme_relay}, the relay nodes are overhearing broadcast
communication 
from the 
sensors to the GW, and are therefore able to aid the GW with
additional information about the sensors' data. In
particular, the relays will XOR the incoming data at a bit level, i.e., without
decoding\cite{frabou06}.  Here one simply zero pads the
shortest symbols in order to make them all of equal length \cite{ostque10}.

\begin{ex}
Consider the scheme in Fig.~\ref{fig:scheme_relay} and
assume that the GW has received either only  ${s}_1(k)$ or only
${s}_2(k)$. If the relay  receives both ${s}_1(k)$  and
${s}_2(k)$, then it transmits
\begin{equation}
  \label{eq:4b}
 r_1(k)=s_1(k)\oplus s_2(k)
\end{equation}
 to the GW. If $r_1(k)$ is successfully received, then the  GW is  able to
recover both ${s}_1(k)$ and $s_2(k)$ and thereby reconstruct both values
$\hat{y}_1(k)$ and $\hat{y}_2(k)$ by use of $r(k)$ and its own message 
$s_1(k)$ or $s_2(k)$, see Table~\ref{tab:networkcoding}. \hfs
\end{ex}

\begin{table}[t]
  \begin{center}
\begin{tabular}{c|c}
 {\color{black}Data successfully received} &  {\color{black}Values reconstructed}\\ \hline
{\color{black}$s_1(k)$,  $s_2(k)$, $r_1(k)$} & {\color{black}$\hat{y}_1(k)$,
  $\hat{y}_2(k)$}
\\ \hline
{\color{black}$s_1(k)$,  $s_2(k)$}  & {\color{black}$\hat{y}_1(k)$,
  $\hat{y}_2(k)$ } \\ \hline 
{\color{black}$s_1(k)$,   $r_1(k)$} &{\color{black} $\hat{y}_1(k)$, $\hat{y}_2(k)$}  \\ \hline
{\color{black}  $s_2(k)$, $r_1(k)$} &{\color{black} $\hat{y}_1(k)$, $\hat{y}_2(k)$}  \\ \hline
{\color{black}$s_1(k)$ } & {\color{black}$\hat{y}_1(k)$}  \\ \hline
{\color{black}$s_2(k)$} & {\color{black}$\hat{y}_2(k)$ } \\ \hline
{\color{black} $r_1(k)$ }& {\color{black} none} \\ \hline 
{\color{black}none} & {\color{black} none} \\ \hline
\end{tabular}
\caption{Reconstructed values at the GW when using the estimation architecture in Fig.~\ref{fig:scheme_relay}
  with network coding as described in Section~\ref{sec:proc-at-relays}.} 
\label{tab:networkcoding} 
\end{center}
\vspace{-8mm}
\end{table}

\subsection{Key properties and complexity issues of the proposed coding schemes}
\subsubsection{Independent coding}
This is the simplest of the proposed architectures and is furthermore a fundamental part of zero-error coding as well as XOR-based network coding. Since the set of possible bit-rates $\mathcal{B}_m$ for the $m$th sensor is discrete (but actually not limited to integer valued elements due to entropy coding), the optimization over bit-rates is non-linear and non-convex. Fortunately, the cardinality of $\mathcal{B}_m$ can usually be chosen small in practice, and we therefore simply let the GW perform a brute-force search over all possible candidate bit-rates. 

\subsubsection{Zero-error coding}
The advantage of ZEC over independent coding is that one can reduce the rate of any given
entropy coder by making it dependent upon another entropy coder without increasing the complexity at the sensor nodes. 
The complexity at the GW is, however, increased, since the GW has to decide upon whether ZEC should be used or not, see Section \ref{sec:search}. 
If the channels from sensors ${m^\star}$ and $m$ to the GW are both reliable, or if at least
one of them is, then it is beneficial to exploit ZEC as is also evident from the simulations in Section \ref{sec:sim_zec}. 
It is worth emphasizing that a reduction of the number of bits to transmit, immediately translates into an energy reduction for a fixed transmission power. 


\subsubsection{Multiple-description coding}
When the channels are unreliable and causing packet losses, it is advantageous to use MDC and thereby sent multiple packets, see e.g., Fig. \ref{fig:mdc_vs_sdc}, which illustrates the reconstruction accuracy\footnote{The accuracy is measured before the GW applies its KF.} due to using MDC as a function of the channel quality. 
With the chosen approach to MDC, which is based upon index-assignments \cite{ostjen06} (i.e., table-lookups), the complexity at the sensor nodes is not increased over that of independent coding. Moreover, since closed-form solutions for the best choice of MDC parameters and codebooks exist, the complexity at the GW is only slightly increased. 
The bit-rates of the individual packets are generally smaller than those used for the single packet case. Moreover, the transmission power can often be reduced when using MDC, since it is more likely that at least one small packet out of several packets is received than one particular large packet is received. 

\subsubsection{XOR-based network coding}
If there is a relay available (e.g., one of the sensor nodes could act as a
relay node), which overhears broadcast messages, then it is advantageous to
exploit, e.g., simple XOR-based network coding whenever a subset of the channels
are experiencing fading. Since the individual sensors simply perform independent coding, their complexity is not increased. The complexity at the relay is determined by the XOR operations, which can be efficiently executed on most hardware architectures. Due to help of the relay, the individual sensors can reduce their transmission powers, which in turn saves energy.

\section{Transmission Effects and Power Issues}
\label{sec:transmission-effects}
 We will model transmission effects  by introducing 
the binary stochastic arrival processes
\begin{equation*}
  \gamma_m^i(k) =
  \begin{cases}
  1&\text{if $s_m^i(k)$ arrives error-free at time $k$, when}\\
  &\text{transmitted directly from sensor $m$ to the GW,}\\
  0& \text{otherwise.}
  \end{cases}
\end{equation*}
 Transmission effects when using  the estimation architecture
  with relays and where no MDC or ZEC is used, see Fig.~\ref{fig:scheme_relay},
  are  modeled in a similar  
  manner.  {\color{black} Here, we introduce the  
  binary stochastic arrival processes
$\zeta^\ell_m=\{\zeta^\ell_m(k)\}_{k\in \N_0}$ and 
${\tilde\gamma}_\ell=\{\tilde\gamma_\ell(k)\}_{k\in \N_0}$,  see
Fig.~\ref{fig:quantities}} and where
 \begin{equation*}
   \begin{split}
  \zeta_m^\ell(k) &=
  \begin{cases}
    1&\text{if $s^1_m(k)$ arrives error-free at time $k$, when}\\
    &\text{transmitted  from sensor $m$ to the $\ell$-th relay,}\\
    0& \text{otherwise,}
  \end{cases}\\
  \tilde\gamma_\ell(k) &=
  \begin{cases}
    1&\text{if $r_\ell(k)$ arrives error-free at time $k$ at the GW,}\\
    0& \text{otherwise.}
  \end{cases}
   \end{split}
 \end{equation*}

\subsection{Channel Power Gains}
\label{sec:channel-power-gains}
In the sequel,   we denote by
  $g_m(k)$  the complex channel gain (at time $k$) between the sensor $m$ and the
  GW,  by  $g^\ell_m(k)$ the channel gain between the $m$-th
    sensor and the $\ell$-th 
  relay, and by $\tilde{g}_\ell(k)$ the channel gain between the $\ell$-th relay
  and the GW, see Fig.~\ref{fig:quantities} 
 The transmission power used
by the  radio power amplifier of the $m$-th sensor is denoted $u_m(k)$,  
  whereas that of the $\ell$-th relay is $\mu_\ell(k)$. If we assume that the
packet length is equal to the 
 bit-rate, and that the bit errors are independent of one another at a given
 time $k$, then the conditional success
probabilities  
\begin{equation}
 \label{eq:5}
  \begin{split}
    \lambda_m^i(k) &\eq \Prob\{ \gamma_m^i(k) = 1\,|\,u_m(k),
    g_m(k),b_m(k),J_m(k)
    \}\\ &\quad= \big(1- \beta\big(u_m(k)
    |g_m(k)|^2\big)\big)^{\frac{b_m(k)}{J_m(k)}}\\
    {\color{black} \rho_m^\ell(k)} &\eq {\color{black}  \Prob\{ \zeta_m^\ell(k) =
      1\,|\,u_m(k), g^\ell_m(k),b_m(k)  \}}\\ &\quad= \big(1- \beta \big(u_m(k)
    |g^\ell_m(k)|^2\big)\big)^{b_m(k)}\\
    {\color{black} \tilde\lambda_\ell(k)} &\eq {\color{black}  \Prob\{
      \tilde\gamma_\ell(k) = 1\,|\,\mu_\ell(k), \tilde{g}_\ell(k),\tilde
      b_\ell(k) \}}\\ &\quad= \big(1- \beta \big(\mu_\ell(k)
    |\tilde g_\ell(k)|^2\big)\big)^{\tilde b_\ell(k)},
  \end{split}
\end{equation}
for $u_m(k),\mu_\ell(k)>0$ and where $\tilde b_\ell(k)$ is the largest packet length received by the
$\ell$-th relay at time $k$, see
Section~\ref{sec:proc-at-relays}. In~(\ref{eq:5}), the function
$\beta(\cdot)\colon [0,\infty ) \rightarrow [0,1]$ is the bit-error 
rate (BER). It is
monotonically decreasing function and depends on the
modulation scheme employed. 


\begin{figure}[t]
  \centering
    \input{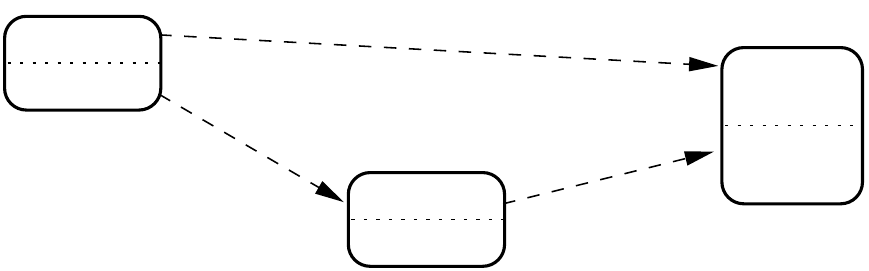_t}
    \caption{{\color{black} Transmission power levels, channel gains,
        transmission outcomes and conditional success 
      probabilities for the estimation architecture with relays and
      sensors. Only one sensor and relay are shown.}}
    \label{fig:quantities}
  \end{figure}

\par It follows from~\eqref{eq:5}, that one
can improve transmission reliability  and, thus, state estimation
accuracy for  given channel gains, by
transmitting shorter packets and/or  by  increasing the
power levels used by the transmitters. However, as we have seen in
Section~\ref{sec:mdc}, smaller values of packet lengths $b_m(k)$
will lead to larger quantization distortion.
Furthermore, the success
probabilities are affected by the channel power gains of the
different channels. Hence, the statistical properties of the channels
will have an impact on the transmit power devised by the controller.

\par We shall assume a block (flat) fading channel model where the complex
  gains $g_m(k), g^\ell_m(k), \tilde g_\ell(k)$ are all constant over
the duration of a packet, and fading
between packets. In many wireless sensor network applications 
the complex channel gains are assumed to be i.i.d.\ which would make
sense if measurements are transmitted rarely as compared to the 
fading speed. Here we shall, however, adopt a more general model where  complex channel
gains are correlated. 
In particular, when the fading channel taps are subject to Rayleigh fading, it
is convenient to  adopt
a first order Markov model\cite{wancha96} 
of the form\footnote{The use of higher order models, as adopted
  for example in \cite{linahl02,badbea05} is straightforward.}
\begin{equation}
\label{eq:43b}
\begin{split}
g_m(k) &= a_{g_m} g_m(k-1) + e_{g_m}(k)\\
g^\ell_m(k) & = a_{g^\ell_m} g^\ell_m(k-1) + e_{g^\ell_m}(k),\\
\tilde g_\ell(k) &= a_{\tilde g_\ell}\tilde g_\ell(k-1) +e_{\tilde
  g_\ell}(k)
\end{split}
\end{equation}
where $a_{g_m}, a_{g^\ell_m},  a_{\tilde g_\ell}$ determines the
amount of correlation and where $e_{g_m}(k), e_{g^\ell_m}(k), e_{\tilde
  g_\ell}(k) $ are mutually independent zero mean circular symmetric complex Gaussian white noises with
appropriate covariances. 
To reduce   complexity, the GW discretizes the
instantaneous fading gains of these channels into $N$ intervals $[\Gamma_{n},
\Gamma_{n+1}]$,  $n \in \{0,\dotsc, N-1\}$, $\Gamma_0 = 0$, $\Gamma_N=\infty$,
and adopts a homogeneous finite state Markov chain
(FSMC) model with  associated states $\varsigma_n$, $n\in\{0,\dotsc, N-1\}$~\cite{wanmoa95,golvar96}. 
 The probability that the channel gain
switches from state $\varsigma_n$ to state $\varsigma_j$ within a single time
step is denoted 
$p_{n,j}$. 
We assume that channel states switch only between neighbors, thus,
$p_{n,j} = 0$,  for all $|n-j| > 1$.


\subsection{Energy Use}
\label{sec:energy-use}
When using WSs it is of fundamental importance to save energy. We thus have to
find a suitable balance between the transmit power used and the
estimation accuracy obtained. The energy used by each
sensor $m\in\{1,\dots,M\}$ to transmit $s_m(k)$ can be quantified via
\begin{equation*}
  E_m(b_m(k)u_m(k))
 \eq
 \begin{cases}
   \dfrac{b_m(k)u_m(k)}{r} + E_{\text{P}} &
   \text{if  $u_m(k)>0$,}\\
   0 & \text{if  $u_m(k)=0$,}
 \end{cases}
\end{equation*}
where 
$E_{\text{P}}$ denotes the processing cost, i.e., the energy needed for
wake-up,  circuitry and sensing, and $r$ is the channel bit-rate.
\par Due to physical limitations of the radio power amplifiers, the power levels are
constrained, for given saturation levels $\{u_m^\text{max}\}$, according to:
\begin{equation}
  \label{eq:15gg}
  0\leq u_m(k) \leq u_m^\text{max}, \quad \forall k \in\N_0,\quad \forall
  m\in\{1,2,\dots,M\}.
\end{equation}

\par The energy
consumption of the relays can be quantified similarly by introducing energy functions
$\tilde{E}_\ell(\mu_\ell(k)\tilde b_\ell(k))$. For simplicity, we will focus on relays operating in
 on-off mode, with  pre-determined  transmission power levels $\{\mu_\ell\}$, thus
\begin{equation}
  \label{eq:2}
  \mu_\ell(k) \in\{0,\mu^\text{max}_\ell\}, \quad \forall \ell \in\{1,2,\dots,L\}
\end{equation}
{\color{black} The relays
  transmit only if  the controller has 
  assigned  $\mu_\ell(k) = \mu^\text{max}_\ell$ and sensor 
data which is needed to perform network coding has been successfully received, see
Section~\ref{sec:proc-at-relays}. }

\section{Kalman Filtering with Multiple Intermittent Sensor Links and Coding}
\label{sec:kalm-filt-with}
State estimation is performed at the GW, which also governs the code-book
selection of the sensors, see Figs.~\ref{fig:scheme}  and
\ref{fig:scheme_relay}. 
We will assume that the GW knows, whether packets
received from
the sensors contain errors or not.
\footnote{This can be handled by the use of a
  simple cyclic redundancy check.} Thus, at  time $k$,  past and present
realizations of all
transmission processes
$ \{ \gamma_m^i(k-t)\}_{ t \in\N_0}$,
    $i \in \{0,\dots, J_m(k)-1\}$, $m\in\{1,\dots,M\}$,
and, {\color{black} in case of the estimation architecture with relays, the
  transmission outcome processes associated to links
  from relays to the GW, namely,}
$ \{ \tilde\gamma_\ell(k-t)\}_{t \in\N_0}$,
    $\ell \in \{1,\dots, L\}$,
are available at the GW to form the state estimate $\hat{x}(k)$. 

  \begin{table}[t]
    \begin{center}
      \begin{tabular}{c|c|c|c|c|c|c}
        $\gamma_1^{1}(k)$ & $\gamma_2^{1}(k)$ & $\zeta_1^{1}(k)$ &
        $\zeta_2^{1}(k)$ & $\tilde\gamma_1(k)$ &  $\theta_1(k)$ & $\theta_2(k)$ \vspace{0.2mm}\\  \hline
        1&1&1&1&1&1&1\\ \hline\hline
        1&1&1&0&0&1&1\\ \hline
        1&1&0&1&0&1&1\\ \hline
        1&1&0&0&0&1&1\\ \hline\hline
        1&0&1&1&1&1&1\\ \hline\hline
        0&1&1&1&1&1&1\\ \hline\hline
        1&0&1&0&0&1&0\\ \hline
        1&0 & 0&1&0 & 1&0\\ \hline
        1&0 & 0&0&0 & 1&0\\ \hline\hline
        0&1&1&0&0&0&1\\ \hline
        0&1 & 0&1&0 & 0&1\\ \hline
        0&1 & 0&0&0 & 0&1\\ \hline\hline
        0&0 & 1&1&1 & 0&0\\ \hline\hline
        0&0 & 1&0&0 & 0&0\\ \hline
        0&0 & 0&1&0 & 0&0\\ \hline
        0&0 & 0&0&0 & 0&0\\ \hline
      \end{tabular}
      \caption{{\color{black} Reconstruction processes of the architecture in
          Fig.~\ref{fig:scheme_relay} with network coding as described in
          Section~\ref{sec:proc-at-relays}. Note that the relay transmits only
          if it has received both symbols $s_1(k)$ and $s_2(k)$.}}
      \label{tab:reconstruction}
    \end{center}
    \vspace{-10mm}
  \end{table}

\subsection{The Reconstruction Processes}
\label{sec:reconstr-proc}
To elucidate the situation
 we introduce the discrete reconstruction
processes $\{\theta_m(k)\}_{k\in\N_0}$, via
  \begin{equation*}
     \theta_m(k) =
    \begin{cases}
      { 1}&\text{if $\hat{y}_m(k)$ can be
          reconstructed at time $k$,}\\ 
       { 0}&  \text{otherwise.}
    \end{cases}
  \end{equation*}
Clearly these processes depend on the transmission outcomes, as
dictated by the coding schemes employed. More precisely, if the $m$-th sensor uses only independent coding and no
    network coding, then  
    $\theta_m(k)=\gamma^{1}_m(k)$. If, in the estimation architecture of
    Fig.~\ref{fig:scheme}, at time $k$ the sensor $m$ uses 
    MDC,
    then $\theta_m(k)=1$, if and only if at least one of the $J_m(k)$
    descriptions of $y_m(k)$ is
    successfully received at the GW and we have:
  \begin{equation*}
   \theta_m(k) = 1-\prod_{i\in\{1,2,\dots, J_m(k)\}} \big (1-\gamma_m^i(k)\big).
  \end{equation*}
If ZEC with dominant coder $m^{\star}$ is used, then
    \begin{equation}
  \label{eq:45b}
  \theta_m(k) = \begin{cases}
   \gamma_m(k)\gamma_{m^\star}(k) &
   \text{if  $u_m(k) u_{m^\star}\!(k)>0$,}\\
   0 & \text{if  $u_m(k)u_{m^\star}\!(k)=0$,}
 \end{cases} 
\end{equation}
 for all $m\in \{1,2,\dots,M\}$, see \cite[Eq.\ (17)]{queahl10}.  On
    the other hand, if in the setup with relays, see Fig.~\ref{fig:scheme_relay},
    network coding is used at time $k$, then $\theta_m(k)$ also depends upon the
    transmission outcomes involving the relays, namely, $\tilde{\gamma}_\ell(k)$ and
    $\zeta_m^\ell(k)$.
For example, for the case given in
Table~\ref{tab:networkcoding},  the processes $\theta_1(k)$ and  $\theta_2(k)$
are determined as per
Table~\ref{tab:reconstruction}.

\subsection{System Model}
\label{sec:signal-model}
  The GW knows which coding method was used at
  current time $k$
  and
  also has access to the $M$ reconstruction process realizations. Thus, for state
  estimation purposes, the overall system amounts 
  to sampling~(\ref{eq:xk})-(\ref{eq:yk}) only at the successful transmission
  instants of each sensor link. It is convenient to model the overall
  estimation architectures by
  introducing the discrete stochastic output matrix process $\{C(k)\}_{k\in\N_0}$ and
  associated measurements $\{y(k)\}_{k\in\N_0}$  as in:\footnote{We note that, if MDC is used, then 
 $\hat{y}_m(k)$ in~(\ref{eq:3}) denotes the
reconstruction of $y_m(k)$ based on the $0\leq j\leq J_m(k)$
received descriptions.  This value differs from $y_m(k)$ due to the
  measurement noise $v_m(k)$ and the quantization noise.}
\begin{equation}
  \label{eq:3}
  C(k)={\mathcal C}(\theta(k))\eq
  \begin{bmatrix}
    \theta_1(k) C_1 \\  \theta_2(k) C_2\\ \vdots \\
 \theta_M(k)C_M
\end{bmatrix}\quad,\;
   y(k) \eq
   \begin{bmatrix}
   \theta_1(k)\hat{y}_1(k)\\ \theta_2(k)\hat{y}_2(k) \\ \vdots \\
   \theta_M(k)\hat{y}_M(k)
   \end{bmatrix},
\end{equation}
where
\begin{equation*}
  \theta(k) \eq
  \begin{bmatrix}
    \theta_1(k)&\theta_2(k)& \dots & \theta_M(k)
  \end{bmatrix}.
\end{equation*}

\par The following time-varying KF gives the best linear
estimates of the system state in~(\ref{eq:xk}) given the information available
at the GW,  for both estimation architectures
under study: 
\begin{equation}
  \label{eq:KF}
  \begin{split}
    \hat{x}(k)&=\hat{x}(k|k-1) +K(k) \big(y(k) - C(k) \hat{x}(k|k-1)\big)\\
    \hat{x}(k + 1|k) &= A\hat{x}(k|k-1) + AK(k)
    \big(y(k) - C(k) \hat{x}(k|k-1)\big)\\
    P(k+1|k) 
    &= A (I-K(k)C(k)) P(k|k-1)A^T+Q
\end{split}
\end{equation}
where  the gain $K(k)$ and equivalent measurement noise covariance $R(k)$ are given by
\begin{equation}
  \label{eq:8}
  \begin{split}
 K(k)&\eq P(k | k-1) C(k)^T\big(C(k)P(k | k-1)C(k)^T +R(k) \big)^{-1}\\
R(k)&\eq \diag \big(R_1+D_1(k),\dots,R_M+D_M(k)\big).
\end{split}
\end{equation}
 Here, $D_m(k)$ are the distortions due to quantization, whereas $Q$ and $R_m$
 are the driving noise and measurement noise 
covariances, respectively. The recursion~(\ref{eq:KF}) is initialized with
$P(0)=P_0$ and $\hat{x}(-1)=0$, see~(\ref{eq:xk}). In the linear Gaussian case,
$P(k+1|k)$  corresponds to
the prior covariance of the estimation error. The posterior error covariance matrix
is then given by \cite{andmoo79}
\begin{equation}
  \label{eq:7}
  P(k|k) = (I-K(k)
  C(k))P(k|k-1).
\end{equation}

\begin{rem}
  Since $C(k)$ is known at the GW, the above Kalman
  filter uses all successfully reconstructed measurements to form the state
  estimate $\hat{x}(k)$. Those measurements where $\theta_m(k)=0$ are not taken
  into account.  This property is reflected in the expression for the filter
  gain $K(k)$, where the time-varying matrix $C(k)^T$ pre-multiplies the term
  $\big(C(k)P(k|k-1)C(k)^T +R(k) \big)^{-1}$. Thereby, $\hat{x}(k)$ is not updated
  based on those values, see~(\ref{eq:KF}).\hfs
\end{rem}

\section{On-line Design  of  Coding and Power Levels}
\label{sec:predc-power-contr}
We have seen that transmission power and bit-rate
design involves a trade-off between transmission error probabilities (and, thus,
state estimation accuracy)  and energy use.  We will next present a predictive
controller which optimizes
this trade-off.
To keep the sensors simple, the controller is located at the
GW.  For the first estimation architecture, the
controller output 
contains information on the 
power levels, and the codebooks to be used by the 
sensors.  In case of the second estimation architecture, the
  controller also updates the power levels used by the relays.

\subsection{Signaling}
\label{sec:bit rate-reduction}
To save signal processing energy  at the sensors {\color{black} and relays}, we
would like to limit power control
signaling as much as possible.
The command signal for each sensor $m$ will contain, in addition
to the
codebook index $I_m(k)$,
 a finitely quantized power increment, say $\delta
u_m(k)$. 
 Upon reception of
 $(\delta u_m
(k),I_m(k))$, the 
sensor chooses the codebook $I_m(k)$ (to be used for
encoding $y_m(k)$),  and
reconstructs the power level to be used by its radio power
amplifier by simply setting
$ u_m(k) = u_m(k -1)+\delta u_m(k)$.
For the second estimation architecture, see
  Fig.~\ref{fig:scheme_relay}, only one bit is needed to
  convey each of the relay power levels 
  $\mu_\ell(k)$, see~(\ref{eq:2}).

\subsection{Cost Function}
\label{sec:cost-function}
To trade  energy consumption for estimation cost, at
each  instant $k$, the 
controller determines the power level increments $\delta u_m(k+1)$ and codebook
indices $I_m(k+1)$ and, for the
estimation architecture in Fig.~\ref{fig:scheme_relay} also the relay power
levels $\mu_\ell(k+1)$,   by minimizing the  cost function
\begin{equation}
  \label{eq:Vk}
  \begin{split}
    V&(S(k+1))\\
    &\eq\E\big\{
    \tr{P}(k+1|k+1)\,\big|\, g(k),P(k+1|k),S(k+1)\big\}\\
    &\qquad+\varrho V_E(S(k+1)),
  \end{split}
\end{equation}
where
\begin{equation}
  \label{eq:27}
  \begin{split}
    S(k+1) &= \{I_m(k+1),\delta u_m(k+1),\mu_\ell(k+1)\}\\ &\qquad
    m\in\{1,2,\dots,M\},\;\ell \in\{1,2,\dots,L\} .
  \end{split}
\end{equation}
The first term in~(\ref{eq:Vk}),  quantifies the estimation quality. 
The conditional expectation in~(\ref{eq:Vk}) is taken for
given $P(k+1|k)$, and channel gains 
$g(k)$,  which are available at the GW at time $k$.  Averaging is with respect to
 the set 
of
possible reconstructions and transmission outcomes due to
receiving different 
subsets of descriptions for 
each sensor. The probability distribution of these sets depends upon the decision
variables, i.e., the
bit-rates,  coding schemes and power levels.

 \par The second term in $V(S(k+1))$, 
 quantifies the energy use at the next time
step.\footnote{If desired, a term which penalizes the size of the 
power control signal $\delta u_m(k+1)$ (or  of $S(k+1)$), which is transmitted
from the gateway to the sensors can be readily included in the cost function.} Thus, $\varrho\geq 0$ is a tuning parameter which allows the designer to
trade-off estimation accuracy for energy use  at the sensors and relays. For  the architecture without
relays the latter is given by:
\begin{equation*}
  V_E(S(k+1))= V^{S}_E(S(k+1))\eq
  \sum_{m=1}^{M}E_m(b_m(k+1)u_m(k+1)),
\end{equation*}
{\color{black} whereas, for the estimation architecture with relays,}
\begin{equation*}
    V_E(S(k+1))=V^{S}_E(S(k+1))+\sum_{\ell=1}^{L}\tilde{E}_\ell(\tilde
b_\ell(k+1)\mu_\ell(k+1)).
\end{equation*}
On-line optimization of the cost function in~(\ref{eq:Vk}), gives rise to the
desired power levels and coding strategy,
\begin{equation}
  \label{eq:1}
S(k+1)^{\text{opt}}=\arg\min_{S(k+1)\in\, \mathbb{S}_{k+1}}  V(S(k+1)),
\end{equation}
where the finite set $\mathbb{S}_{k+1}$ represents the constraints on the decision variables
$I_m(k+1)$, $\delta u_m(k+1)$, and $\mu_\ell(k+1)$. In particular, power levels
and their increments are finite-set constrained such that \eqref{eq:15gg} and
\eqref{eq:2} are satisfied. The resulting controller is non-linear, constrained,
stochastic and adapts to  channel
conditions and current estimation quality, i.e., we have
\begin{equation}
  \label{eq:21}
  S(k+1)^{\text{opt}}= \kappa \big( g(k),P(k+1|k)\big)
\end{equation}
for some mapping $\kappa(\cdot,\cdot)$. It is worth noting that, in general, due
to the cost function being nonlinear and  constraints finite, no closed form
expression for $\kappa(\cdot,\cdot)$ exists. Instead, the optimization
in~(\ref{eq:21}) needs to be performed numerically. To  keep computations low, the cost
in~(\ref{eq:Vk}) looks  ahead at only
 one step.\footnote{Multi-step extensions
 can be easily formulated, following as in\cite{queahl10}} In Section~\ref{sec:controller}, we
discuss some computational issues.

\subsection{Performance Bound}
\label{sec:stochastic-stability}
The effect on packet drops on Kalman filter stability and performance has received
considerable attention in the recent
literature\cite{sinsch04,liugol04,huadey07}. Despite the 
fundamental importance of power control and coding in wireless communications, it
is somewhat surprising that these techniques have received so little attention
in this context. In
fact, to the best of the authors' knowledge, the only works examining power
control for Kalman estimation with packet dropouts are our own
\cite{queahl08a,ostque10,queahl10,queahl12}.  
Theorem~\ref{thm:perf}, stated below, establishes sufficient conditions for the
expected value of the 
state estimation 
error covariance $P(k+1|k)$ to be exponentially bounded. Whilst the main
interest of the current work is on state estimation for stable systems,
Theorem~\ref{thm:perf} also applies to unstable systems,
extending our recent results  documented
in\cite{queahl11b,queahl12,queahl13a} to controller structures such as~(\ref{eq:21}) which
are allowed to use 
the covariance matrices $P(k+1|k)$.
\begin{thm}
\label{thm:perf}
Consider the stochastic process
\begin{equation*}
 \eta (\theta) \eq
\begin{cases}
  1&\text{if $\mathcal{C}(\theta)$ is full rank,}\\
  0& \text{otherwise}
  \end{cases}
\end{equation*} 
and define\footnote{It follows from the analysis in the appendix that this
  conditional probability is independent of time $k$.} 
\begin{equation*}
  \nu(P,g) \eq  \Prob\{\eta(\theta(k+1)) = 0\,|\, P(k+1|k)=P,g(k)=g\}.
\end{equation*}
Suppose that there exists a uniform bound $\rho\in[0,1)$ such that
\begin{equation}
\label{eq:13}
  \nu(P,g) \leq \frac{\rho}{\|A\|^2}, \quad
  \forall (P,g)\in\R^{n\times n} \times \Omega,
\end{equation}
where $\Omega$ is the support of the (complex) channel gains.
Then \begin{equation}
    \label{eq:40}
      \E\big\{ \|P(k|k-1)\|\big\} \leq \rho^k \tr P_0 +\frac{\varpi c+\tr Q}{1-\rho} (1-\rho^k),
      \quad \forall k\in      \N_0,
  \end{equation}
where
\begin{equation}
  \label{eq:30}
  \begin{split}
    c&\eq \max_{\theta \in \{0,1\}^M \colon \eta (\theta)=1}
    \|\mathcal{C}^\dagger(\theta)^T\mathcal{C}^\dagger(\theta)\|, \\
    \mathcal{C}^\dagger(\theta)&\eq
    (\mathcal{C}(\theta)^T\mathcal{C}(\theta))^{-1}\mathcal{C}(\theta)^T\\ 
    \varpi&\eq||A||^2  \Big((\pi e/{6})\sigma_{y_m}^2 \sum_{m=1}^M2^{-2\check{b}_m} +
    \sum_{m=1}^M R_m\Big)
  \end{split}
\end{equation}
and $\check{b}_m\eq \min\{\mathcal{B}_m\}$, see~(\ref{eq:4}).
\end{thm}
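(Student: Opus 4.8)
The plan is to establish a one-step affine recursion for $m_k \eq \E\{\|P(k|k-1)\|\}$ of the form $m_{k+1}\le \rho\, m_k + (\varpi c + \tr Q)$ and then to unroll it. Observe first that, combining the two lines of~\eqref{eq:KF} with~\eqref{eq:7}, the prior covariance obeys $P(k+1|k) = A\,P(k|k)\,A^T + Q$, so submultiplicativity of the spectral norm together with $\|Q\|\le \tr Q$ and $\|APA^T\|\le \|A\|^2\|P\|$ for $P\succeq 0$ gives $\|P(k+1|k)\| \le \|A\|^2\|P(k|k)\| + \tr Q$. Hence everything reduces to bounding $\|P(k|k)\|$, which I would do by conditioning on whether $\mathcal C(\theta(k))$ has full column rank, i.e.\ on the value of $\eta(\theta(k))$.

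On the event $\{\eta(\theta(k))=1\}$, the information form of the Kalman update gives $P(k|k)^{-1} = P(k|k-1)^{-1} + C(k)^T R(k)^{-1} C(k) \succeq C(k)^T R(k)^{-1} C(k)$; since $R(k)$ in~\eqref{eq:8} is diagonal and positive definite, the right-hand side is invertible precisely when $\mathcal C(\theta(k))$ has full column rank, so $P(k|k)\preceq (C(k)^T R(k)^{-1} C(k))^{-1}\preceq \|R(k)\|\,(C(k)^T C(k))^{-1}$, where the last step uses $R(k)^{-1}\succeq\|R(k)\|^{-1}I$. Taking norms, using $\|R(k)\|\le \tr R(k)$ and $\|(\mathcal C^T\mathcal C)^{-1}\| = \|(\mathcal C^\dagger)^T\mathcal C^\dagger\|\le c$ from~\eqref{eq:30}, I obtain $\|P(k|k)\|\le c\,\tr R(k)$. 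Then by~\eqref{eq:8}--\eqref{eq:9}, the Gaussian identity $h(y_m(k)) = \tfrac12\log_2(2\pi e\,\sigma_{y_m}^2)$, and the bit-rate constraint~\eqref{eq:4} with $\check b_m = \min\mathcal B_m$, one has $\tr R(k) = \sum_m(R_m + D_m(k)) \le \sum_m R_m + \tfrac{\pi e}{6}\sum_m \sigma_{y_m}^2 2^{-2\check b_m} = \varpi/\|A\|^2$, so that $\|P(k|k)\|\le \varpi c/\|A\|^2$ on this event, whence $\|P(k+1|k)\|\le \varpi c + \tr Q$. On the complementary event I would use only the universal inequality $P(k|k)\preceq P(k|k-1)$ (again from the information form), giving $\|P(k+1|k)\|\le \|A\|^2\|P(k|k-1)\| + \tr Q$. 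Combining the two cases through the indicator of $\{\eta(\theta(k))=0\}$ produces the pointwise bound $\|P(k+1|k)\| \le \varpi c + \tr Q + \|A\|^2\,\mathbf{1}\{\eta(\theta(k))=0\}\,\|P(k|k-1)\|$.

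I would then take the conditional expectation given $\big(P(k|k-1),g(k-1)\big)$: the quantity $\|P(k|k-1)\|$ is measurable with respect to this conditioning, and by the definition of $\nu$ together with the time-invariance noted in the footnote, $\E\big[\mathbf{1}\{\eta(\theta(k))=0\}\,\big|\,P(k|k-1),g(k-1)\big] = \nu\big(P(k|k-1),g(k-1)\big) \le \rho/\|A\|^2$ by hypothesis~\eqref{eq:13}. This yields $\E\big[\|P(k+1|k)\|\,\big|\,P(k|k-1),g(k-1)\big] \le \rho\,\|P(k|k-1)\| + \varpi c + \tr Q$, and taking total expectations gives $m_{k+1}\le \rho\, m_k + \varpi c + \tr Q$. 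Since $m_0 = \|P_0\|\le \tr P_0$, iterating this inequality and summing the geometric series $\sum_{j=0}^{k-1}\rho^j = (1-\rho^k)/(1-\rho)$ produces exactly~\eqref{eq:40}.

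The step I expect to be the main obstacle is not the matrix manipulation but the \emph{probabilistic bookkeeping} behind $\nu$: namely, that conditioning on the pair $\big(P(k|k-1),g(k-1)\big)$ suffices to pin down the law of $\theta(k)$ — so that all remaining randomness (the channel transition $g(k-1)\mapsto g(k)$ through the FSMC, the controller's decision $S(k)=\kappa(g(k-1),P(k|k-1))$ via~\eqref{eq:21}, and the MDC/ZEC/network-coding transmission outcomes feeding~\eqref{eq:3}) is conditionally independent of the more distant past given that pair, and that the resulting conditional probability is independent of $k$. This is precisely what the footnote to the statement defers to the appendix, and I would develop it there using the Markov structure of~\eqref{eq:43b}. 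A minor technical point is to ensure $P(k|k-1)\succ 0$ and $R(k)\succ 0$ so the information-form identities are legitimate: $R(k)\succ 0$ holds since $D_m(k)>0$, and $P(k|k-1)\succ 0$ follows for $k\ge 1$ from $Q\succ 0$ (and for $k=0$ from $P_0\succ 0$, or can be handled by a limiting argument).
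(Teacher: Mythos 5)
Your proposal is correct and reaches the stated bound, but it differs from the paper's proof in two substantive ways, both worth noting. First, the paper runs the drift argument on the \emph{trace}, $V_k\eq\tr P(k|k-1)$, and in the full-rank case bounds $\E\{V_1\mid\cdot\}$ by comparing the Kalman filter against the deliberately suboptimal one-step predictor $\bar{x}(k+1)=A\mathcal{C}^\dagger(k)y(k)$, invoking optimality of the KF among linear estimators to get $\tr P(k+1|k)\leq\tr\bar{P}(k+1|k)\leq\varpi c+\tr Q$; you instead bound $P(k|k)$ directly in the L\"owner order via the information form, $P(k|k)\preceq(C^TR^{-1}C)^{-1}\preceq\|R\|\,(C^TC)^{-1}$, and work with the spectral norm throughout. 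Your route arrives at the identical constant $\varpi c+\tr Q$ but needs the invertibility caveats you correctly flag ($P(k|k-1)\succ0$, or a limiting/monotonicity argument), whereas the paper's pseudo-inverse-estimator comparison sidesteps them entirely. Second, to unroll the drift inequality the paper first proves that $Z(k)=(P(k|k-1),g(k-1))$ is a Markov chain and then appeals to a stochastic-Lyapunov result (Meyn, Prop.~3.2); your direct tower-property iteration of $m_{k+1}\leq\rho m_k+\varpi c+\tr Q$ is more elementary and, strictly speaking, needs less than the full Markov property --- only that the conditional law of $\eta(\theta(k))$ given $(P(k|k-1),g(k-1))$ is the time-invariant $\nu$. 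That said, the justification of this last fact (controller decisions are a deterministic function of the conditioning pair, channel gains are Markov, transmission outcomes depend only on current gains and decisions) is exactly the content of the paper's Markov-chain lemma, so the ``probabilistic bookkeeping'' you defer is the same work the paper does explicitly; it is not an additional gap, but it is not free either. The final cosmetic discrepancies ($m_0=\|P_0\|\leq\tr P_0$, $\|Q\|\leq\tr Q$) are harmless weakenings consistent with the stated bound.
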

\begin{proof}
  See appendix.
\end{proof}

Theorem~\ref{thm:perf} establishes a bound on the expected value of
$\|P(k|k-1)\|$ which decays with exponential rate $\rho$.  The latter quantity
depends on the spectral norm of the system matrix $A$ and the conditional
probabilities of the observation matrix $\mathcal{C}(\theta)$ being
full-rank. Thus, our result allows one to infer upon estimation accuracy from
conditional channel dropout probabilities.


\section{Computational Issues}
\label{sec:controller}
In the estimation architecture considered, the sensors do 
  not need to carry out significant computations. In particular, 
optimizations  and 
  Kalman filter recursions are performed by the gateway, see
  Fig.~\ref{fig:scheme}.  In relation to Kalman filtering, it is necessary to invert the matrix $C(k)P(k | k-1)C(k)^T +R(k)$
given in~(\ref{eq:8}). This matrix is of size $M\times M$ and therefore scales with the number of sensors. Since it is positive semidefinite and symmetric, its inverse can be obtained with efficient algorithms based upon Cholesky factorizations and having complexity on the order of $\mathcal{O}(M^3)$. 
 
To analyze computational cost of the optimizations in~(\ref{eq:1}), we first note that, from~(\ref{eq:7}), we have 
$ {P}(k+1|k+1)= \big( I - K(k+1) C(k+1) \big)   P(k+1|k)$.
 Thus, to evaluate the cost function in~(\ref{eq:Vk}) and find the constrained optimizer $S(k+1)^{\text{opt}}$, the controller first uses
$P(k|k-1)$ and
$\theta(k)$ to calculate $C(k)$ and $P(k+1|k)$
using~(\ref{eq:3}),~(\ref{eq:KF}) and~(\ref{eq:8}). Clearly, $K(k+1)$ and $C(k+1)$
depend upon $P(k+1|k)$, $\theta (k+1)$ and $R(k+1)$. The latter quantity
depends upon the decision variable $S(k+1)$. On the other hand, the
reconstruction processes at time $k+1$ depend upon the transmission 
outcomes processes at time $k+1$, as per the mapping induced by the codebook choice; see, e.g., Tables~\ref{tab:networkcoding}
and~\ref{tab:reconstruction}. Thus, for the architecture in
Fig.~\ref{fig:scheme}, the conditional expectation in~(\ref{eq:Vk})
can be evaluated  by using the conditional probabilities
$\Prob\{\gamma^i_m(k+1)=1\,|\,g(k),P(k+1|k),S(k+1) \}$ as required. Note that, by using~(\ref{eq:5})
 and the law of total probability, one obtains that
\begin{equation*}
  \begin{split}
    \Prob&\{\gamma^i_m(k+1)=1\,|\,g(k),P(k+1|k),S(k+1)\}\\
    &=\Prob\{\gamma^i_m(k+1)=1\,|\,g(k),S(k+1) \}\\ 
    &=\E\big\{\Prob\{\gamma^i_m(k+1)=1\,|\,g_m(k+1),g(k),S(k+1) \}\big\}  \\
    &=\E\big\{\Prob\{\gamma^i_m(k+1)=1\,|\,g_m(k+1),S(k+1)
    \}\big\}\\ &=\E\big\{\lambda^i_m(k+1)\big\}, 
  \end{split}
\end{equation*}
where expectation is taken with respect the conditional channel gain distribution
$\Prob\{g_m(k+1)\,|\,g(k)\}$, see the FSMC model in Section~\ref{sec:channel-power-gains}.

\subsection{Two-stage search strategy}\label{sec:search}
 As outlined above, at every instant $k$, the proposed controller first finds the optimal
set of power value increments and bit-rates.  In the estimation
  architecture with relays, the sensors simply perform independent coding. In
  contrast, in the estimation architecture depicted in Fig.~\ref{fig:scheme}, each sensor can either do independent
coding, ZEC, or MDC. 
Moreover, in the case of MDC, for any given bit-rate, the controller also
needs to decide upon the number of descriptions $J_m(k+1)$ and the level of
redundancy between the $J_m(k+1)$ descriptions, see\cite{ostque09}.

\par To develop a simple but efficient method to select the coding scheme for
the estimation architecture without relays,
we propose  a two-stage search strategy.
In the first stage, the GW only evaluates whether independent coding or MDC
should be used at each sensor. In the second stage, if independent coding
is chosen for more than one sensor, then the GW further evaluates whether ZEC
should be used across these  sensors (or a subset of them). The second
stage uses exhaustive search.
 The first stage can be implemented very efficiently. Here it is important to
note that the energy
consumption at a given sensor is 
independent of the number of descriptions chosen.
Furthermore, under
high-resolution assumptions, it is reasonable to assume that the
quantization error  at any given sensor does not contain
significant information about current and past measurements of any
of the sensors.\footnote{Unless one utilizes substractively dithered
quantizers, the quantization error will generally not be
independent of the input signal. However, the quantization error
can be made uncorrelated with the input signal.} 
Thus,
from the KF point of view, $\hat{y}_m(k)$ amounts to a noisy version of
$y_m(k)$; the smaller the variance of this noise, the better the
estimate $\hat{x}(k)$. This motivates us to adopt a method where,
having chosen the optimal $(\delta_m(k),b_m(k))$ and, thus $u_m(k)$,
the controller selects the quantization scheme which results in the minimum
expected distortion on $y_m(k)$.\footnote{Our results  in
  Section~\ref{sec:simulations},  use $x(k)$ and rather than $y(k)$ and therefore take the effect of
  the Kalman filter into account when finding the coding scheme that yields the minimum expected distortion.}
Furthermore, given the bit-rate $b_m(k)$ and by considering
$\{\lambda_m^i : i=0,\dotsc, J_m(k) \}, \forall J_m(k)$ as weights, the  simple
method to find the optimal number of descriptions as well as the
optimal amount of redundancy between the descriptions proposed
in~\cite{ostjen06} will be used. The following example illustrates the procedure:

\begin{figure}[t!]
      \centering 
\includegraphics[width=\linewidth]{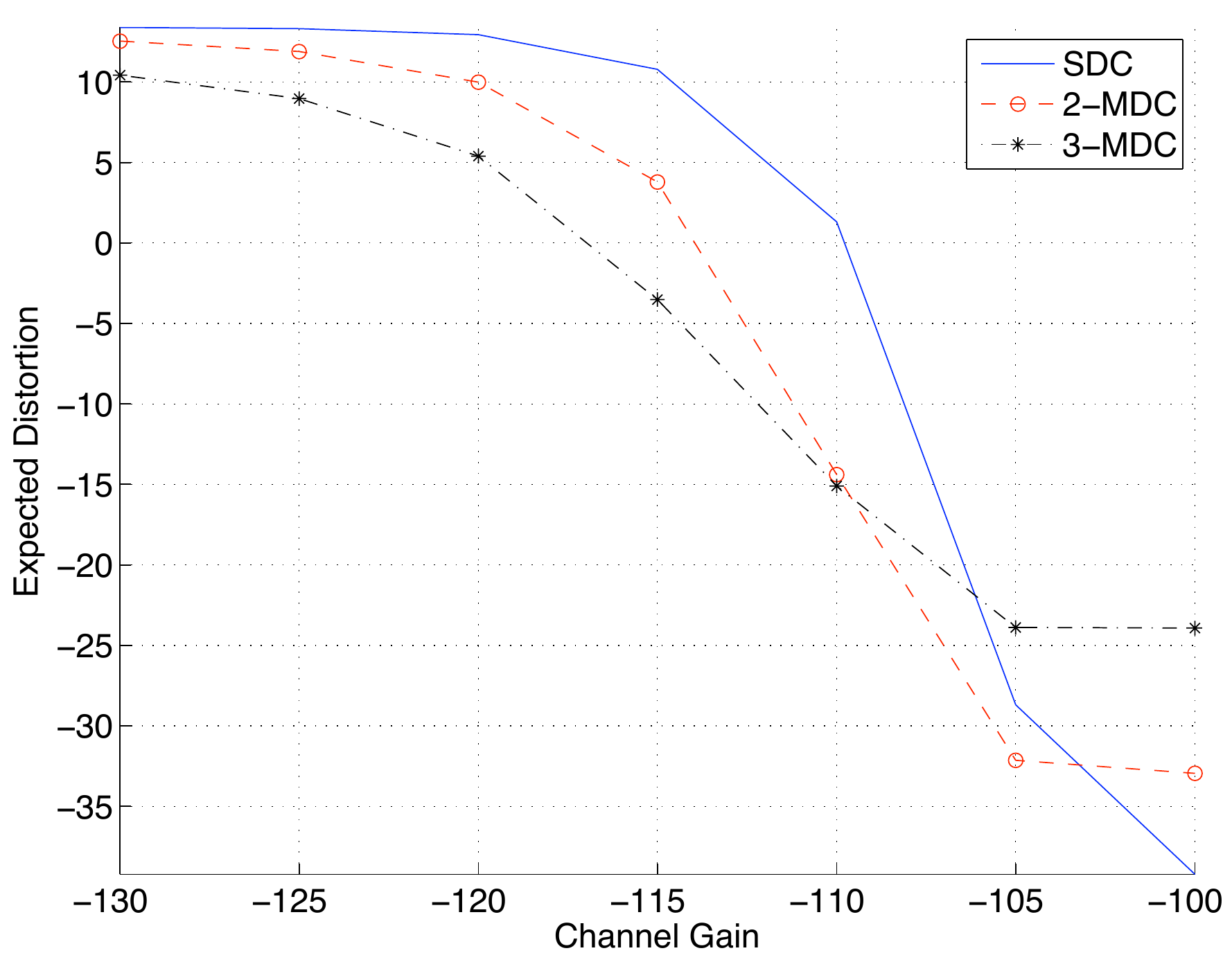}
\caption{MDC vs.\ SDC as a function of channel power gain, $|g_m|^2$. The total bit-rate is fixed at 9 bits/sample; $u_m=5\cdot 10^{-5}$.}
\label{fig:mdc_vs_sdc}
\end{figure}

\subsection{Example of the First Stage Search}
\label{sec:two-stage}
We consider a  second order linear time-invariant system with
  transfer function
  \begin{equation*}
    \label{eq:16}
    \frac{5.2978 (s+19.46)}{(s^2 + 0.05214s + 33.3)},
  \end{equation*}
which upon sampling with a sampling period of $0.1$ [s] can be written in the
form~(\ref{eq:xk}) with 
\begin{equation}\label{eq:A}
  A=
  \begin{bmatrix}
    1.6718 & -0.9948 \\
    1 & 0 \\
  \end{bmatrix}.
\end{equation}
{\color{black}The system poles are oscillatory, located at
  $0.8359 \pm 0.5441 \imath$. The driving noise covariance matrix is chosen as 
$Q=1/2I$, whereas } $P_0=0.3I$.
We study an estimation architecture  having no relays, $M=2$ WSs with noise
variances $R_1=R_2=1/100$. The individual observation matrices are $C_1=
[ 1\;0]$, and $C_2=
 [ 0\;1]$, thus, $c=1$, see~(\ref{eq:30}).

\par 
We analyze the first stage of the two-stage search strategy proposed in Section~\ref{sec:controller}. In other words, we show that the GW can decide up-front whether MDC or independent coding should be used for any given bit-rate.
We focus on a single sensor and fix the bit-rate $b_m(k)=9$ bits/sample and the power level $u_m(k)= 5 \cdot 10^{-5}$.
Fig.~\ref{fig:mdc_vs_sdc}, shows the expected distortion as a function of the
channel power gain, $|g_m|^2$ as observed by the GW before applying the KF. In
this plot, the expected distortion depends upon the coding scheme. 

\par In the case of independent coding (denoted SDC for
``single-description coding''), the encoded description $s_m(k)$ is
either received error-less or considered lost with probability
$\lambda_m(k)$ and $1-\lambda_m(k)$, respectively. Thus, the
expected distortion is given by $\lambda_{m}(k)D_m(k) +
(1-\lambda_m(k))\sigma_{y_m}^2(k),$ since, if $s_m(k)$  is lost, we have
$y_m(k)-\hat{y}_m(k)=y_m(k)$. 
 Similarly, with MDC the expected distortion is a weighted sum over the
distortion due to receiving subsets of descriptions. In Fig.~\ref{fig:mdc_vs_sdc},
2-MDC refers to the case of using 2 descriptions at the given
sensor and 3-MDC refers to the case with 3
descriptions.

\par The analysis suggests that   when the channel is in a
deep fade, it it better to use three descriptions. In the
mid-range of channel SNR, it is better to use two descriptions. When the channel
is very good, SDC is preferable. (Recall that when  MDC is used, the
total bit-rate is fixed at $b_m(k)=9$ bits/sample and evenly split
across the descriptions. Hence, in the case of 3 descriptions,
each description is encoded at 3 bits/sample. Since the packet
length (bit-rate) is reduced, it becomes more probable that a
description will be received without errors.)


\begin{figure}[t!]
      \centering 
\includegraphics[width=\linewidth]{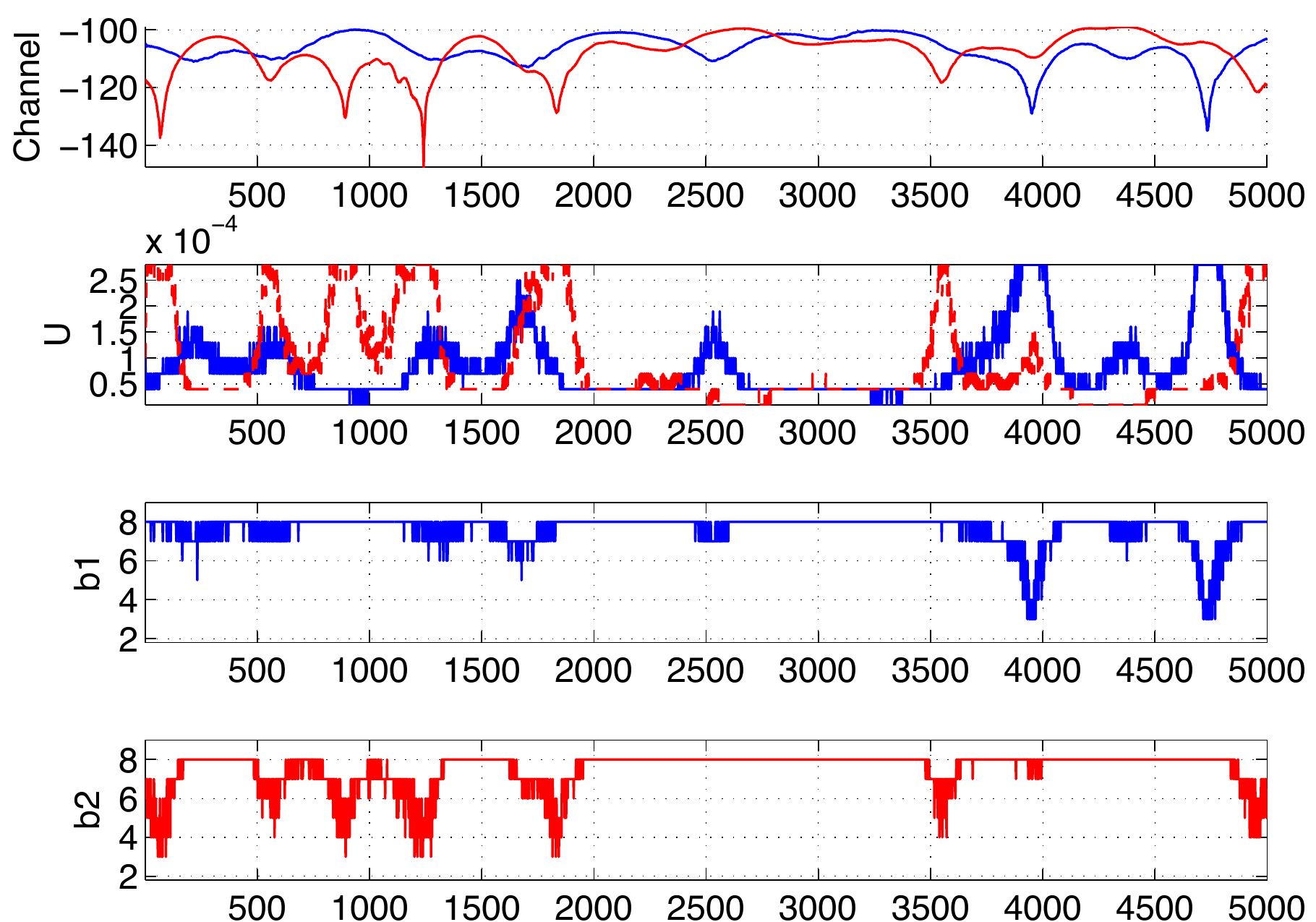}
\caption{Channel power gains: $|g_{1}(k)|^2$ (blue) and $|g_{2}(k)|^2$ (red); Predictive controller with SDC and
  constraints as in~(\ref{eq:17}).} 
\label{fig:indp_coding}
\end{figure}

\section{Results and Discussion}
\label{sec:simulations}
We consider the system described in
Section~\ref{sec:two-stage} and use measured
channel gain data obtained at the 2.4~GHz ISM band within an office
space area at the Signals and Systems group at Uppsala University,
Sweden. The top diagram of Fig.~\ref{fig:indp_coding} illustrates
the channel power gains of two realizations, one with horizontal and one
with vertical polarization. All our
  subsequent results use the same channel data,   thus facilitating performance
  comparisons.



 
\subsection{Estimation Architecture without Relays}
\label{sec:estim-arch-with}
 We first examine the performance of the  controller of 
Section~\ref{sec:predc-power-contr} for the architecture in
Fig.~\ref{fig:scheme}.

\subsubsection{Independent Coding}
A simple instance of the controller proposed is where 
only  independent coding (i.e., SDC) is allowed. Power levels and  increments
are constrained as per
\begin{equation}
\label{eq:17}
{\color{black}  0\leq u_m(k)\leq 3\times 10^{-4},\quad 
  \delta u_m(k) \in \{\pm 3\times 10^{-5} \}.}
\end{equation}
 We allow each sensor to use a
scalar 
quantizer at a bit-rate of $b_m(k)\in\{3,\dotsc,8\}$ bits/sample. The second plot in
Fig.~\ref{fig:indp_coding} shows the resulting power levels for the given channel
data. Notice that when the 
channel power gain is dropping, the 
power level is increased until reaching saturation. Then the bit-rate is decreased to
compensate for the increased expenditure; see e.g., $g_{2}(1800)$ and
$g_{1}(4750)$. The third and forth plots in 
Fig.~\ref{fig:indp_coding}, illustrate the chosen bit-rates
$b_1(k)$ and $b_2(k)$.

\begin{figure}[t]
     \centering 
\includegraphics[width=.47\textwidth]{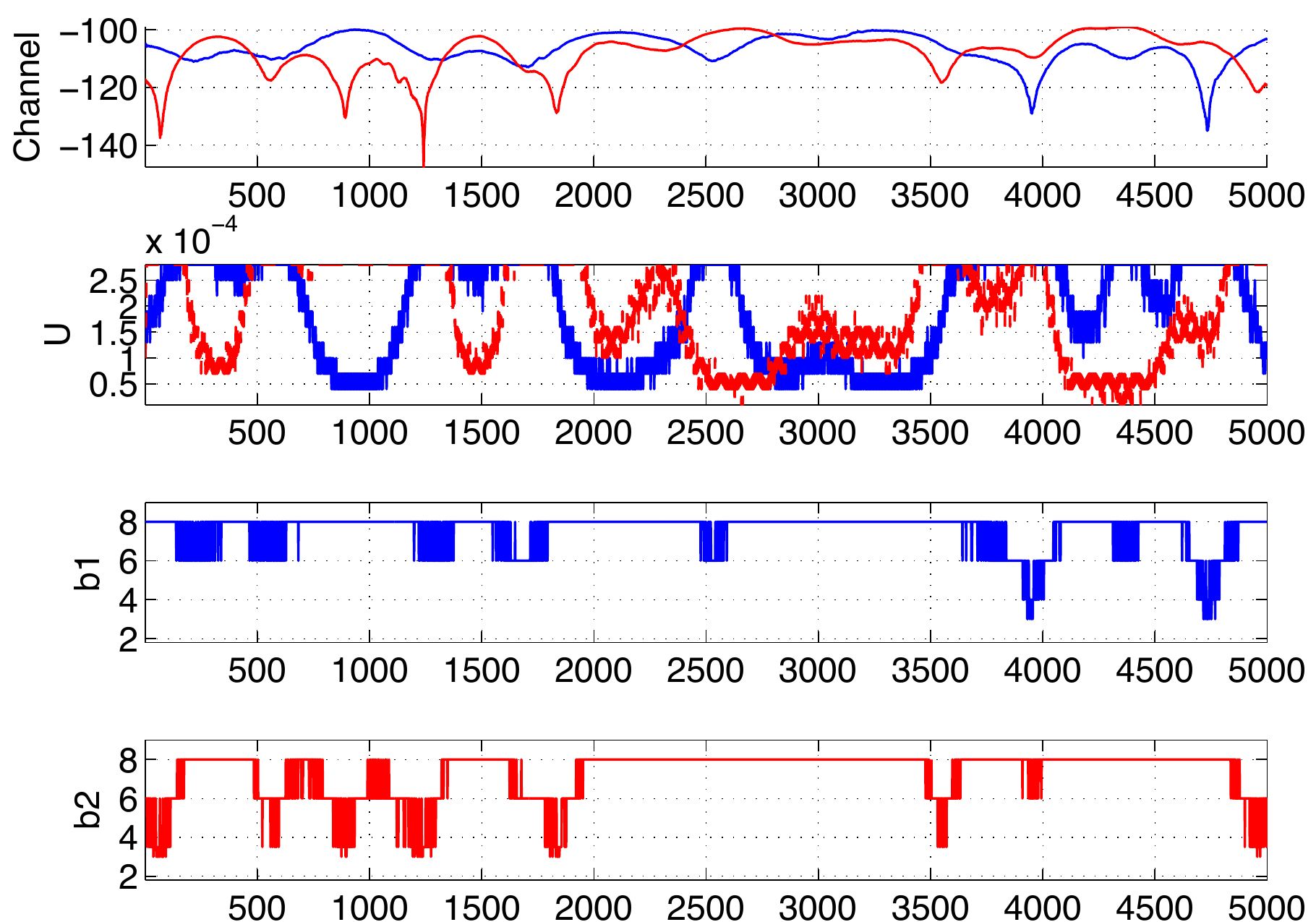}
\caption{Simple logic-based scheme.} 
\label{fig:simple}
\end{figure}

\subsubsection{Simple logic-based controller}
The \emph{independent coding} scheme described above constitutes our baseline
predictive controller. It is interesting to note that it provides a vast
improvement 
over a simpler algorithm, where the choice of the sensors transmission power
and bit-rates are based only on the predicted channel gains. In particular, let
us compare the product of the estimated channel gain $\hat{g}_m(k+1)$ and the
sensor transmission power $u_m(k)$ (from the previous time instance) to a
pre-defined threshold $T_u$. If it is above the threshold, then the power is
decreased by $\delta u_m(k)$ and if it is less than the threshold, then the power is increased by $\delta u_m(k)$. If the choice of power is above its maximum allowable power or below its minimum, it is saturated at its previous level. This simple decision logic, approximately inverts the channel. Let us define the threshold so that when the channel gain is at -110 dB, the transmission power is 0.2 mW, which gives $T_u =  10^{-110/10} \times 2\times 10^{-4}  = 2\times 10^{-15}$. 

When the channel is poor, the probability that a long bit sequence is received error-free is low. On the other hand, in good channel conditions, it makes sense to use longer bit sequences and thereby improve the estimation accuracy of the KF. With this in mind, the channel gain is split into four regions and the bits are allocated using the following rules:
\begin{equation}
b_m(k+1) =\begin{cases}
8, & \text{if $\hat{g}_m(k+1)\geq -110$ dB}, \\
6, & \text{if $-110 >\hat{g}_m(k+1) \geq -120$ dB}, \\
4, & \text{if $-120 >\hat{g}_m(k+1) \geq -130$ dB},\\
3, & \text{if $\hat{g}_m(k+1) < -130$ dB}.
\end{cases}
\end{equation}

The performance of this simple architecture is presented in
Fig.~\ref{fig:simple}. 
This simple control algorithm, when used on the given channel data, leads to an
estimation accuracy of $D=0.0637$ with a total energy usage of $98.5$ nJ. For
comparison, the baseline predictive controller  with independent coding, uses
only $45.5$ nJ at the same estimation accuracy. Thus, energy savings above $50
\%$ are possible with the predictive controllers proposed in the present
work.


\begin{figure}[t]
      \centering 
\includegraphics[width=.47\textwidth]{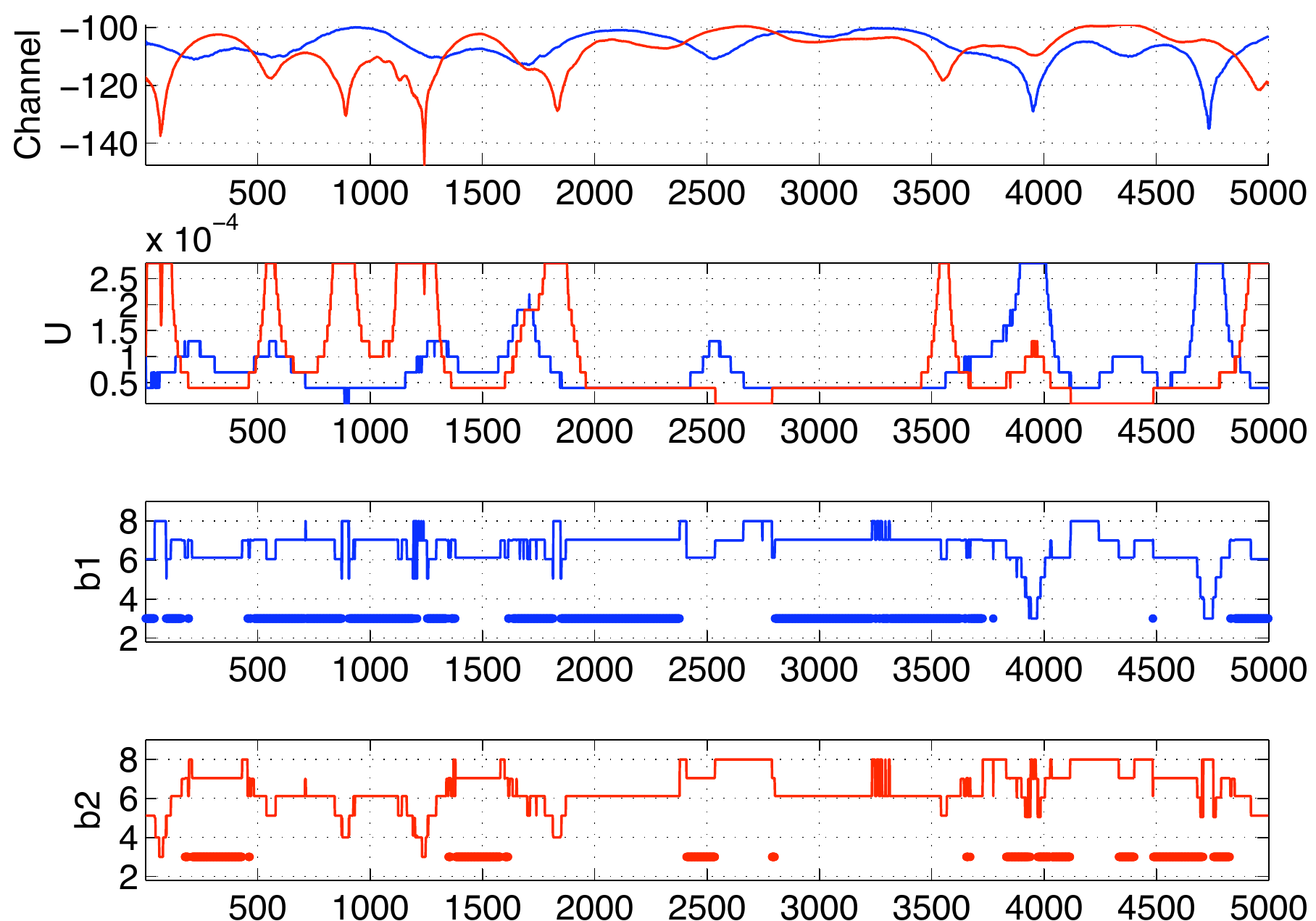}
\caption{Predictive controller with SDC and ZEC. 
  The dots indicate that ZEC is
  used and that 
  the given sensor is  dominating.}  
\label{fig:zec}
\end{figure}

\subsubsection{Robustness towards uncertainties in model dynamics}
The proposed controller relies upon knowledge of the dynamics of the underlying
process. In particular, the update of the Kalman gain requires knowledge of $A$,
and proper scaling of the quantizers require knowledge of the variances of the
processes $y_m, m=1,\dotsc,M$, which depend upon $A$. To illustrate robustness
of the scheme to model uncertainties, let us consider the
independent coding scheme described above and assume that the controller uses
$A$ given in \eqref{eq:A}, whereas the actual system dynamics is characterized by
$$\tilde{A} =
\begin{bmatrix}
  1.68& -0.99\\ 1 & 0
\end{bmatrix}.
$$
Then, the resulting variances using
$\tilde{A}$ are $\sigma_{\tilde{y}_1}^2=19.95$ and
$\sigma_{\tilde{y}_2}^2=21.93$, which are very close to those obtained using
$A$, i.e.,  $\sigma_{{y}_1}^2=21.48$ and $\sigma_{{y}_2}^2=21.93$. Indeed,  
based on simulations we have measured the average empirical entropies (bit rates) to differ by only $0.02$ bits/sample in the two cases. On the other hand, since the Kalman gain is not correctly updated in each iteration due to model mismatch, the state estimator becomes suboptimal and the estimation accuracy is reduced. Indeed, simulations show that the estimation accuracy measured by the average squared error $D=\frac{1}{K}\sum_{k=1}^{K} \|x(k)-\hat{x}(k)\|^2$ is $D=0.2056$ when $A$ is not perfectly known, as compared to $D=0.0637$ when $A$ is known. In both cases, the energy expenditure is the same ($45.5$ nJ).

\subsubsection{Independent Coding and ZEC}\label{sec:sim_zec}
Fig.~\ref{fig:zec} shows results based on the same
system and channel data as in Fig.~\ref{fig:indp_coding}; however,
this time we allow the proposed controller to use SDC as well as ZEC, see
Section~\ref{sec:zec}. In the plots showing the bit-rates, we have
used dots at the bottom to indicate when ZEC is being used. In
particular, a dot indicates that ZEC is being used across the two
sensors and that the given sensor is the dominant sensor, i.e.,
the given sensor is using independent coding whereas the other
sensor is using dependent coding. As an example, $|g_{1}(k)|^2$ is dropping at $k\approx 4000$. The controller decides upon  power
saturation for that channel and a subsequent decrease in the
bit-rate $b_1$. Here Sensor 2 is dominant, whereas Sensor 1 uses dependent coding.
\begin{figure}[t!]
      \centering 
\includegraphics[width=.47\textwidth]{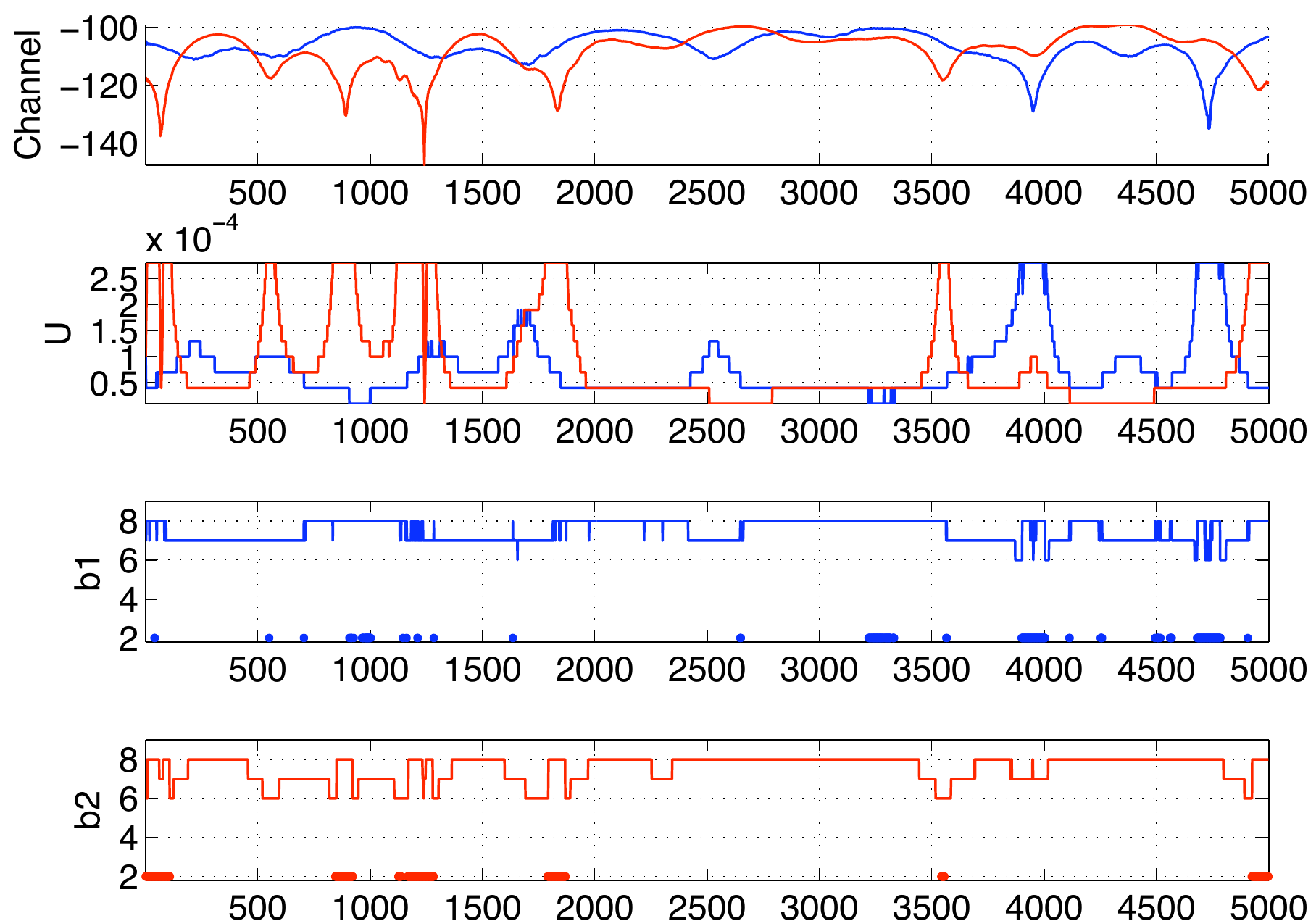}
\caption{Predictive controller with SDC and MDC. 
   The dots indicate that MDC is used.}
\label{fig:mdc}
\end{figure}

\subsubsection{Independent Coding and MDC}
We next examine a situation where the controller is allowed to use SDC and
MDC with two descriptions. Contrary to the case of ZEC, the sensors can use MDC
independently of each other.  The total bit-rate is restricted
to $b_m(k)\in\{6,7,8\}$ bits, so that the side description rates
are restricted to $\{3,3.5,4\}$ bits. The results are shown in
Fig.~\ref{fig:mdc}. The dots at the bottom of the bit-rate plots
show when MDC is  used at the particular sensor. Notice that
 especially when the channel is weak, e.g., $|g_{2}(1800)|^2$ and
$|g_{1}(4000)|^2$,  MDC is 
utilized.


\begin{figure}[t!]
       \centering 
\includegraphics[width=.47\textwidth]{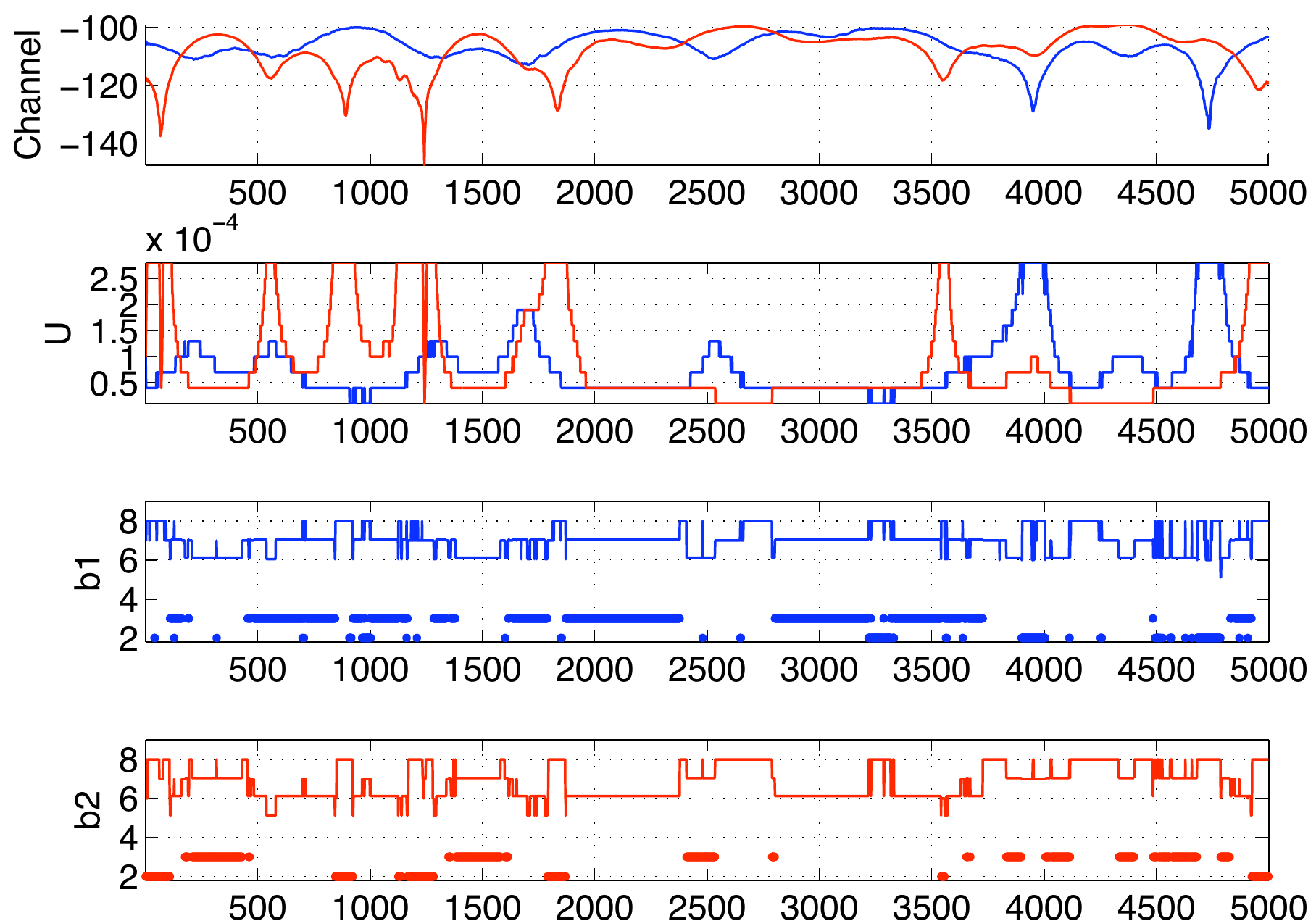}
\caption{Predictive controller with independent coding, MDC, and ZEC. 
 The dots at bit-rate level 3 indicate that ZEC is used and that the given sensor is the dominating one. The dots at bit-rate level 2 indicate that MDC is used.}
\label{fig:mdczec}
\end{figure}

\subsubsection{Independent Coding, ZEC, and MDC}
We finally combine all of the above methods and thereby allow the predictive controller to choose
SDC, ZEC, as well as MDC. We adopt the two-stage
strategy as described in Section~\ref{sec:controller}. Thus, we first evaluate
whether SDC or MDC should be used at the different
bit-rates. 
 If MDC is chosen, then the controller also finds the
optimal  trade-off between side and
central distortions. If SDC is chosen for both
sensors,  then the controller applies the second stage  involving a brute-force
search over all the bit-rates in
order to find out whether it is beneficial to use ZEC instead of
SDC. 
The results are shown in
Fig.~\ref{fig:mdczec}, where the dots at the bit-rate level 3
indicate that ZEC is used and that the given sensor is the
dominating one; the dots at bit-rate level 2 refer to
 MDC. From Fig.~\ref{fig:mdczec} we note, for example,
that for Sensor 1 ZEC is used at time $3400$ with Sensor 1 being
dominant, whereas, at time $k=4000$, MDC is used for the same
sensor.

\subsection{Estimation Accuracy vs.\ Energy Usage}
An underlying theme of our present work is the trade-off between estimation accuracy
and energy usage. In particular,  the cost function 
in~(\ref{eq:Vk}), quantifies this trade-off.
To illustrate the potential gains which can be obtained by allowing sensors to
perform coding which goes beyond independent coding, Table~\ref{tab:performance_low_power} shows
the empirical average cost, 
\begin{equation}
  \label{eq:18}
  V  \triangleq \frac{1}{5000} \sum_{k=0}^{4999} V^\star(k)
\end{equation}
where $V^\star(k)\eq\min_{S(k)} V(S(k))$,
for the different control methods examined so far.
Notice that a gain of about
1.4\% is possible simply by replacing the entropy encoders at the
sensors by entropy encoders designed based on the principle of
ZEC. It is worth emphasizing that this strategy does not rely upon an increase in the online complexity at
the sensors. 
The offline design of the entropy encoder is of course more
complicated than the design of traditional entropy encoders.

\begin{table}[t]
\begin{center}
\begin{tabular}{l|c|c}\hline
Setup & $V$ & Performance Gain \\ \hline
Indep.\ coding & 0.1256 & --- \\
Indep.\ coding + ZEC  & 0.1238 &  1.43\%    \\
Indep.\ coding + MDC  & 0.1154 &  8.12\%    \\
Indep.\ coding + ZEC + MDC & 0.1135 & 9.63\% \\ \hline
\end{tabular}
\caption{Performance of the different predictive controllers for constraints as
  in~(\ref{eq:17}).} 
\label{tab:performance_low_power}
\end{center}
\end{table}

\par With MDC   a gain of about 8.1\% is possible.
Here, it is important to recall that MDC is in our case implemented using a
single scalar quantizer followed by a table lookup, 
which maps the index of the quantized measurement to indices in side codebooks. Thus, the online complexity at the sensors
is not increased by this method either. Furthermore, the complexity at the GW is
not increased significantly, due 
to adopting the two-stage search strategy of Section~\ref{sec:controller}.
The offline design of the MDC encoders is, however, a lot more complicated than
the design of traditional 
scalar quantizers, see~\cite{ostjen06} for details.

\par Finally, if the controller is allowed to choose between independent coding,
ZEC, and MDC, then  the overall 
gain is about 9.6\%. This shows that the two more advanced coding schemes, ZEC
and MDC,  complement each other. 

%

If we increase the maximum allowable power level, then it becomes more
beneficial to use ZEC and less beneficial to use MDC. This is illustrated in
Table~\ref{tab:performance_high_power}, where we have shown the results
corresponding to the constraints
\begin{equation}
\label{eq:17b}
  0\leq u_m(k)\leq 5\times 10^{-4},\quad 
  \delta u_m(k) \in \{\pm 5\times 10^{-5} \}.
\end{equation}

\begin{figure}[t]
\begin{center}
\includegraphics[width=0.47\textwidth]{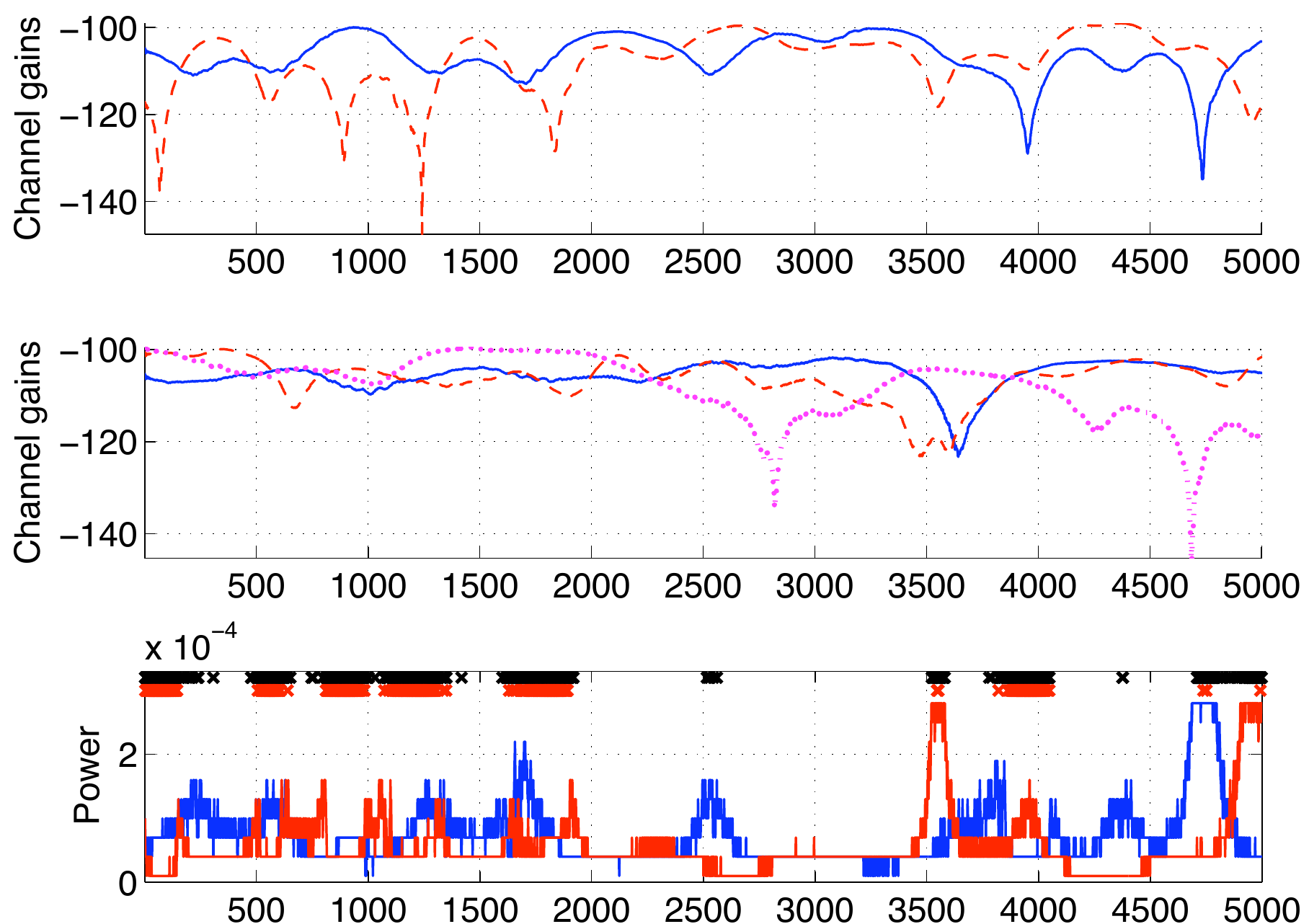}
\caption{Predictive controller for the estimation architecture in
  Fig.~\ref{fig:scheme_relay}, with known power channel gains. The GW successfully received $r_1(k)$
   in 764 out of 1291 instances where $\mu_1(k)>0$, thus, 
   the relaying
  scheme  has 59.18\% efficiency.}
\label{fig:known-channel}
\end{center}
\end{figure}

\begin{table}[t]
\begin{center}
\begin{tabular}{l|c|c}\hline
Setup & $V$ & Performance Gain \\ \hline
Indep. coding & 0.1391 & --- \\
Indep. coding + ZEC  & 0.1363 &  2.01\%    \\
Indep. coding + MDC  & 0.1326 &  4.67\%    \\
Indep. coding + ZEC + MDC & 0.1300 & 6.54\% \\ \hline
\end{tabular}
\caption{Performance of the different predictive controllers for constraints as
  in~(\ref{eq:17b}).} 
\label{tab:performance_high_power}
\end{center}
\end{table}

\subsection{Estimation Architecture with Relays}
\label{sec:estim-arch-with-1}
  We next examine the performance of the predictive controller
 when used for the estimation architecture in
  Fig.~\ref{fig:scheme_relay}. Here, the controller decides upon power level
  increments of sensors, the \emph{on-off} state of the relay, and upon bit-rates
  used by both sensors. The 
  power levels and increments of the sensors are restricted as
  per~(\ref{eq:17}). The relay performs network coding as in~(\ref{eq:4b}) with
  transmission power
  $\mu^{\text{max}}_1 = 6\times 10^{-5}$,  
  see~(\ref{eq:2}).

 \begin{table*}[t!]
\begin{center}
\fontsize{9pt}{9pt}\selectfont
\begin{tabular}{ccclcl}
{\color{black} $\varrho$} & {\color{black}$\phi$ }& ${\color{black} V_E [nJ]}$
&{\color{black} Relay Channel Models} & {\color{black}Reduction of $\phi$} 
& {\color{black}System}\\ \hline
 $10^6$ & 0.0707 & 63.21  &  -- & --  &  Baseline (no relay) \\
 $10^8$ & 0.2021 & 63.77 &  {\color{black}Sensor-Relay (predicted), Relay-GW (predicted) }& --   & Relay always on  \\
 655000 & 0.0341 & 63.11 &  {\color{black}Sensor-Relay (known), Relay-GW (known)} & 51.77\%  &  Relay on/off  \\
 680000 & 0.0382 & 63.11 &  {\color{black}Sensor-Relay (predicted), Relay-GW (predicted)} & 45.87\%  &  Relay on/off  \\
 430000 & 0.0682 & 63.12 & {\color{black} Sensor-Relay (fixed at -100 dB),  Relay-GW (predicted) }& 3.54\% &  Relay on/off  \\
 560000 & 0.0391 & 63.11 & {\color{black} Sensor-Relay (fixed at -105 dB),  Relay-GW (predicted)} & 44.70\% & Relay on/off  \\
 1003000 &  0.0375 & 63.11 &  {\color{black}Sensor-Relay (fixed at -110 dB),  Relay-GW (predicted)} & 46.96\%  &  Relay on/off  \\
 2080000 &  0.0429 & 63.11 & {\color{black} Sensor-Relay (fixed at -115 dB),  Relay-GW (predicted) }& 39.32\%  &  Relay on/off   \\ \hline
\end{tabular}
\caption{Performance gains achieved by using the relay and using network coding
  governed by the proposed controllers.} \label{tab:sim1}
\end{center}
\end{table*}

\par In Fig.~\ref{fig:known-channel}, the top diagram shows
the sensor to GW channel power gains $|g_{1}(k)|^2$ and $|g_{2}(k)|^2$ (corresponding to those used in
Section~\ref{sec:estim-arch-with}); the middle diagram show the channel power gains
 from the sensors to the relay ($|g_{1}^1(k)|^2$: blue solid line,
 $|g_{2}^1(k)|^2$: red dashed
 line) as well as the channel power gain from the relay to the GW (dotted line); the
 bottom diagram illustrates the chosen power levels of the two sensors.  
The black crosses indicate the time slots where the controller
 has decided to turn on the relay. The red crosses, on the other hand, indicate
 when the relay operation was successful, i.e., when the relay received $s_1(k)$
 and $s_2(k)$ without errors and also successfully transmitted $r_1(k)$,
 see~(\ref{eq:4b}) to the GW. It is clear
from Fig.~\ref{fig:known-channel}, 
that the controller trades off
energy spent on the sensors for energy spent on the relay. Only at
the deepest drops in $|g_{1}(k)|$ and $|g_{2}(k)|$ (occurring after $k=3500$)  the
controller chooses to saturate the sensor power levels. Note that it is
beneficial to rely on the relay and network coding most of the
time.

\par   To compare different scenarios,  we next fix  the total energy used by the
  sensors and relay  by adjusting the weighting term
  $\varrho$ in~(\ref{eq:Vk}), see Table~\ref{tab:sim1}. Therefore, the  controller seeks to distribute
the available energy between the sensor nodes and the relay to 
minimize the state estimation error variance.   As a performance
measure, we adopt the empirical value 
\begin{equation*}
  \phi \eq \frac{1}{5000} \sum_{k=0}^{4999} \tr P(k+1 |k+1).
\end{equation*}

\begin{table}[t!]
\begin{center}
\begin{tabular}{ccccc} \hline
State & Gain [dB]& $p_{k,k-1}$ & $p_{k,k}$ & $p_{k,k+1}$ \\ \hline
1 & -117.77 &  0.0000 & 0.9990 &   0.0010   \\        
2 &  -112.88&  0.0010 &  0.9978 &   0.0013  \\      
3 & -110.50&  0.0013 &   0.9973&    0.0014 \\ 
4  &  -108.83& 0.0014 &   0.9971 &   0.0015  \\
5 &-107.49&   0.0015 &  0.9970  &  0.0015  \\
6 & -106.33&   0.0015 &   0.9970  &  0.0015  \\ 
7 &-105.30&    0.0015&    0.9971 &   0.0014  \\
8 &-104.31&   0.0014 &   0.9973  &  0.0013   \\
9 & -103.32&  0.0013  &  0.9976  &  0.0011   \\
10 & -102.29&  0.0011 &   0.9981 &   0.0008   \\
 11 &-101.08&   0.0008 &   0.9986  &  0.0005 \\     
12 & -99.41&    0.0005   & 0.9995 & 0.0000 \\ \hline
\end{tabular}
\caption{State transition probabilities and the channel gains that the
  controller uses to represent the states.}
\label{tab:states}
\end{center}
\end{table}

\par  Our   baseline system uses SDC and no relay; sensor 
power levels are governed by the predictive controller. 
   In a second configuration, a  relay which is always {\em
on} is used. Since the relay uses most of the available energy
leaving very little for the sensors to spend, the performance is significantly
worse than that of the baseline 
system, see Table~\ref{tab:sim1}.
\par   Significantly better performance can be obtained if  the
  controller decides whether the relay 
shall be \emph{on} or \emph{off}. In this case  Table~\ref{tab:sim1} indicates
a  performance gain in reduction of $\phi$ of almost 52\%, provided the GW
has  exact knowledge of the sensor-relay channel gains $g_{1}^1(k)$ and
$g_{2}^1(k)$. The performance gain is almost 46\% if the GW uses simple  channel power
gain predictions.   Table~\ref{tab:sim1} also illustrates results
of situations where the relay uses only
  a constant
   estimate for the power gains of the
sensor-relay channels. Here, 
we conclude that it is safer to underestimate the sensor-relay channel power gains
than to overestimate them.

\begin{figure}[t!]
\begin{center}
\includegraphics[width=.47\textwidth]{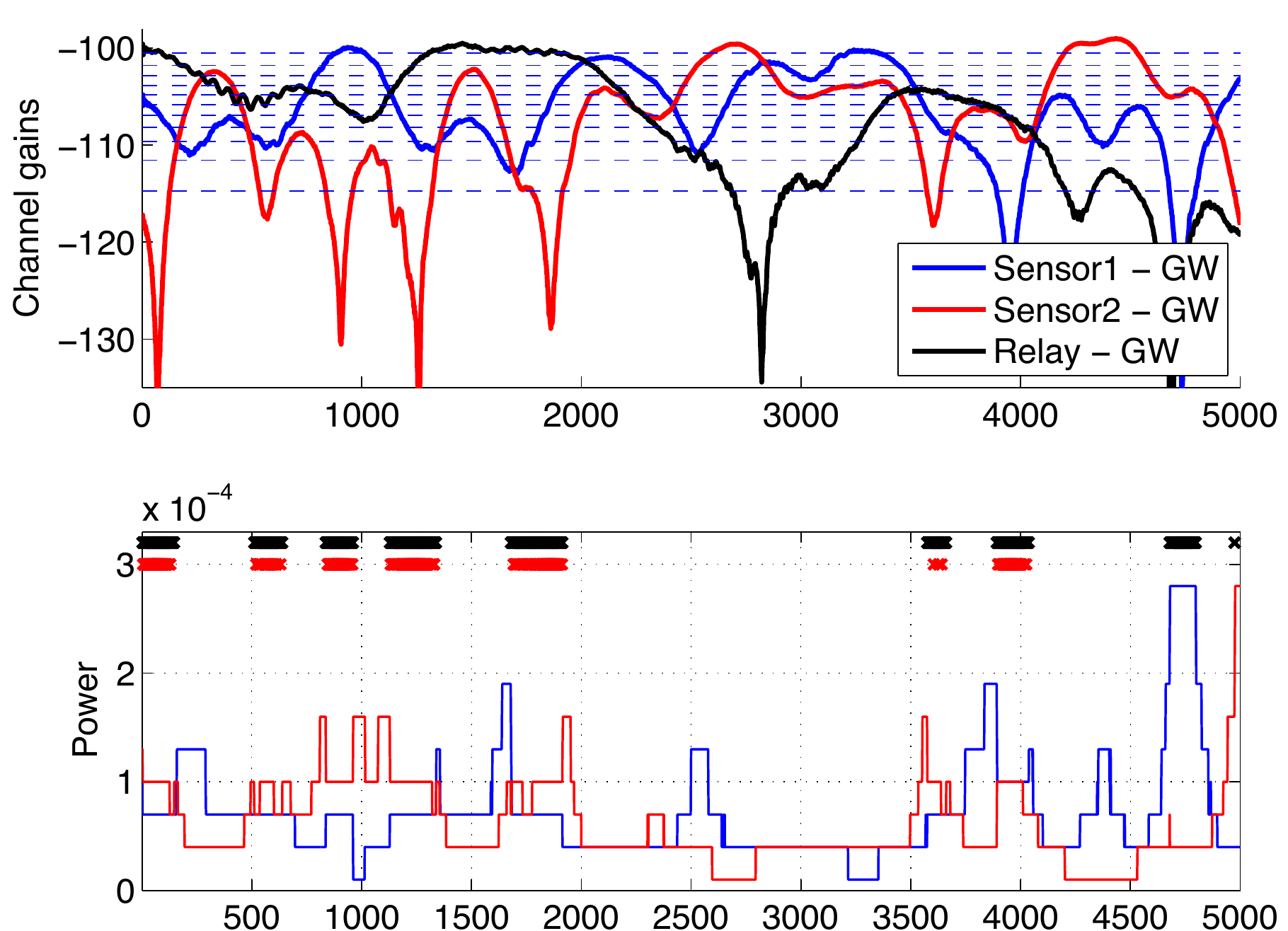}
\caption{The three channels that the GW sees are discretized into $K=12$ intervals as shown by the horizontal dashed lines. 
The GW successfully received $r_1(k)$
   in 495 out of 1196 instances where $\mu_1(k)>0$, thus, 
   giving 41.39\% efficiency.} 
\label{fig:markov_channels}
\end{center}
\end{figure}

\subsection{Estimation Architecture with Relays and Markov Channel Models}
In the previous study we assumed that the GW was able to obtain noisy
predictions of the future channel gains between the sensors and the GW as well
as the relay and the GW.  
In the following, the GW models the instantaneous fading gains of these
three channels by FSMCs, where the possible fading gains are discretized into
$K=12$ intervals (states), see Section~\ref{sec:transmission-effects} for
details. Using the approach described in~\cite{wanmoa95}, the resulting state
transition probabilities are shown in Table~\ref{tab:states}. The intervals
$\Gamma_k$ associated with the states are illustrated by horizontal dashed lines
in Fig.~\ref{fig:markov_channels} and the corresponding state channel gains are
shown in Table~\ref{tab:states}.

\begin{table*}[t!]
\begin{center}
\fontsize{9pt}{9pt}\selectfont
\begin{tabular}{ccclcl}
{\color{black} $\varrho$} & {\color{black}$\phi$ }& ${\color{black} V_E [nJ]}$
&{\color{black} Relay Channel Models} & {\color{black}Reduction of $\phi$} 
& {\color{black}System}\\ \hline
 $1.12\cdot 10^6$ & 0.0658 & 63.08  &  -- & --  &  Baseline (no relay) \\
 $1.07\cdot 10^6$ &  0.0366 & 63.00 &  {\color{black}Sensor-Relay (fixed at -110 dB),  Relay-GW (modeled)} & 44.38\%  &  Relay on/off  \\ \hline
\end{tabular}
\caption{Performance gains achieved by using the relay and using network coding
  governed by the proposed controllers and by modeling the channels as
  first-order Markovian.} \label{tab:markov} 
\end{center}
\end{table*}

 The baseline system has two sensors but no relay. The proposed system  has
 access to one relay.   
The channels between the two sensors and the relays are shown in the middle plot
of
Fig.~\ref{fig:known-channel}. These channels are unknown to the GW and simply
modeled by a fixed channel gain of $-110$ dB. The three other channels, which
are connected to the GW and shown in Fig.~\ref{fig:markov_channels}, are modeled
by the FSMC approach outlined above. The resulting performance levels are shown in
Table~\ref{tab:markov} and are, to some extent, comparable to those obtained
when using
predictions, see
Table~\ref{tab:sim1}. This observation strengthens the case of using the Markov
model in practice.  

\section{Conclusions}
\label{sec:conclusion}
We have studied state estimation with wireless sensors over correlated fading
channels. Our work shows that performance gains can be obtained by the use of
different coding schemes, when governed by a predictive 
controller which also determines  power levels. Through use of a stochastic
Lyapunov function argument we have established sufficient conditions for exponential
boundedness of the covariance of the resulting state estimation error.
 Numerical
results revealed that energy savings of more than 50\% were possible, when
compared to an alternative algorithm, wherein power levels and bit-rates are
determined by simple logic which solely depends upon channel power gains and not
the estimation error covariance. 
  It is worth noting that the coding 
  schemes examined in the present work do not require significant additional on-line
  complexity, when compared to direct quantization of the measurements.
 It is also apparent that the use of relays with simple
  network coding has the potential to
  give notable estimation performance gains, with essentially no additional
  on-line complexity at the sensor and relay nodes. Future work may include the
  study of more general network topologies and also  distributed estimation architectures
where individual nodes have additional processing capabilities; see,
e.g.,\cite{chisch11}.

\appendix[Proof of Theorem~\ref{thm:perf}]
\label{sec:proof-theor-refthm:s}
We shall consider the more general system when the system matrix $A$ is
unstable, and then evaluate the
expressions for stable $A$. To proceed, we adopt a stochastic
Lyapunov function 
approach, as presented, e.g., in\cite{kushne71,meyn89} and first
prepare the following result:

\begin{lem}
  The process $\{Z(k)\}_{k\in\N_0}$, where $Z(k)= \big(P(k|k-1),g(k-1)\big)$,
  is a Markov chain.
\end{lem}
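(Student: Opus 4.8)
The plan is to verify the defining property of a Markov chain directly: conditioned on the present state $Z(k) = (P(k|k-1), g(k-1))$, the next state $Z(k+1) = (P(k+1|k), g(k))$ is independent of the past $Z(0),\dots,Z(k-1)$. To do this I would first recall what determines $Z(k+1)$ from $Z(k)$ and the fresh randomness at step $k$. The channel-gain component evolves according to the FSMC model of Section~\ref{sec:channel-power-gains}: $g(k)$ depends only on $g(k-1)$ and an independent transition, so that part is Markov by construction. The covariance component is the more substantive part.

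For the covariance, I would trace through the recursion~(\ref{eq:KF})--(\ref{eq:8}). Given $P(k|k-1)$, the gateway forms $C(k)$, computes the gain $K(k)$, and then $P(k+1|k) = A(I - K(k)C(k))P(k|k-1)A^T + Q$. The only quantities entering this that are not deterministic functions of $P(k|k-1)$ are $C(k) = \mathcal{C}(\theta(k))$ and $R(k) = \diag(R_1 + D_1(k),\dots,R_M+D_M(k))$. These in turn are determined by the reconstruction vector $\theta(k)$ together with the decision variables $S(k)$. The key observation is that $S(k) = \kappa(g(k-1), P(k|k-1))$ by~(\ref{eq:21}), so $S(k)$ is itself a (deterministic) function of $Z(k)$, adding no new dependence on the past. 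Then $\theta(k)$ is obtained from the transmission-outcome processes $\gamma_m^i(k)$, $\zeta_m^\ell(k)$, $\tilde\gamma_\ell(k)$ at time $k$ via the mappings described in Section~\ref{sec:reconstr-proc}; and by~(\ref{eq:5}) these outcomes are conditionally Bernoulli with success probabilities depending only on $g(k)$ (equivalently on $g(k-1)$ through the FSMC and the fresh transition noise) and on $S(k)$. Crucially, the packet-loss events at time $k$ are drawn independently of the history given the current channel gains and power/bit-rate assignments — this is the i.i.d.-across-packets and block-fading assumption. Hence, conditioned on $Z(k)$, the pair $(g(k), \theta(k))$ has a distribution depending only on $Z(k)$, and therefore so does $Z(k+1) = (P(k+1|k), g(k))$, which establishes the Markov property.

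Concretely I would organize the argument as: (i) write $Z(k+1) = F\big(Z(k), \xi(k)\big)$ for an explicit deterministic map $F$, where $\xi(k)$ collects the FSMC transition noise and the packet-reception coin flips at time $k$; (ii) argue that $\xi(k)$ is independent of $\sigma(Z(0),\dots,Z(k))$ — the channel noises $e_{g_m}(k)$ etc. are white and independent of everything up to $k-1$, and the reception indicators at time $k$ are conditionally independent of the past given $g(k)$ and $S(k)$, both of which are functions of $Z(k)$; (iii) conclude by the standard fact that $Z(k+1) = F(Z(k),\xi(k))$ with $\xi(k) \perp \mathrm{past}\mid Z(k)$ makes $\{Z(k)\}$ Markov. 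I should also note the footnote-worthy point that the conditional law of $(g(k),\theta(k))$ given $Z(k)$ is time-homogeneous, since neither $\kappa$, nor the BER function $\beta$, nor the FSMC transition kernel depends on $k$.

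The main obstacle — really the only delicate point — is handling the decision variable $S(k)$ cleanly. One must be careful that $S(k)$ is chosen at time $k-1$ based on $g(k-1)$ and $P(k|k-1)$ (which the gateway has at that instant), so it is measurable with respect to $Z(k)$ and introduces no dependence on earlier $Z$'s beyond what is already in $Z(k)$; and then that the reception outcomes at time $k$, though they depend on $g(k)$ and $S(k)$, are conditionally independent of the entire past given those two quantities. Making this conditional-independence bookkeeping precise — in particular that $P(k|k-1)$ summarizes all past transmission outcomes relevant to the future via the Kalman recursion — is where the care is needed; the rest is substitution into~(\ref{eq:KF})--(\ref{eq:8}).
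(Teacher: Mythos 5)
Your proposal is correct and follows essentially the same route as the paper's proof: both rest on (i) the Markovianity of the channel-gain process, (ii) the observation that the decision variables $S(k)=\kappa(g(k-1),P(k|k-1))$ are measurable with respect to $Z(k)$, and (iii) the conditional independence of the transmission outcomes from the past given the current gains and decisions, so that the law of $K(k)C(k)$ (and hence of $P(k+1|k)$) given the history depends only on $Z(k)$. Your $Z(k+1)=F(Z(k),\xi(k))$ packaging is just a more explicit rendering of the same argument.
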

\begin{proof}
With the model in Section~\ref{sec:transmission-effects},
$\{g(k)\}_{k\in\N_0}$ are Markovian and 
\begin{equation}
  \label{eq:11}
  \begin{split}
    \Prob&\{g(k)\,|\,Z(k),Z(k-1),Z(k-2),\dots\} \\
    &= \Prob\{g(k)\,|\,g(k-1)\} = \Prob\{g(k)\,|\,Z(k)\}.
  \end{split}
\end{equation}
 On the other hand, when using the controller of Section~\ref{sec:predc-power-contr},
  the power levels, bit-rates and coding method used at time $k$ 
  depend only on $P(k|k-1)$ and $g(k-1)$ (and deterministic
  quantities). Thus,~(\ref{eq:5}) and~(\ref{eq:8}) give that 
  the distribution of the term $K(k)C(k)$ used in~(\ref{eq:KF}) satisfies
 $   \Prob\{K(k)C(k)\,|\,Z(k),Z(k-1),\dots\} =
    \Prob\{K(k)C(k)\,|\,Z(k)\}$,
which implies that
\begin{equation*}
  \begin{split}
    \Prob&\{P(k+1|k)\,|\,Z(k),Z(k-1),Z(k-2),\dots\}\\ &= \Prob\{P(k+1|k)\,|\,Z(k)\}.
  \end{split}
\end{equation*}
Use of~(\ref{eq:11}) shows 
$\Prob\{Z(k+1)\,|\,Z(k),Z(k-1),Z(k-2),\dots\} = \Prob\{Z(k+1)\,|\,Z(k)\}$.
\end{proof}

Having established that $\{Z(k)\}_{k\in\N_0}$ is Markovian, we now adopt  the
procedure used to prove  Theorem 1 in \cite{queahl13a}   and introduce 
$ V_k\eq\tr P(k|k-1)\geq 0$.

\begin{lem}
  \label{lem:drift}
Consider $\nu(P,g)$ and $\varpi$ and $c$ defined in~(\ref{eq:30}). Then,
  \begin{equation*}
    \begin{split}
      \E&\big\{ V_{1} \,|\, Z(0)=(P,g) \big\} \leq \tr Q+(1-\nu(P,g)) \varpi
      c\\ 
      &\quad+
      \nu(P,g) \|A\|^2 \tr P ,\quad \forall (P,g)\in\R^{n\times n} \times
      \Omega.
    \end{split}
  \end{equation*}
\end{lem}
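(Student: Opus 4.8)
The plan is to expand $P(1|0)$ with the Riccati recursion in~\eqref{eq:KF}--\eqref{eq:7} and then bound the resulting posterior covariance by conditioning on whether $\mathcal C(\theta(0))$ is full rank. Writing $C(0)=\mathcal C(\theta(0))$ one has $P(1|0)=A\,P(0|0)\,A^{T}+Q$ with $P(0|0)=(I-K(0)C(0))P(0|-1)$, so, since $Z(0)=(P(0|-1),g(-1))=(P,g)$ is fixed and $\tr(A\,P(0|0)\,A^{T})=\tr(A^{T}A\,P(0|0))\le\|A\|^{2}\tr P(0|0)$, the claim reduces to bounding $\E\{\tr P(0|0)\mid Z(0)=(P,g)\}$. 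I would split this conditional expectation over the events $\{\eta(\theta(0))=0\}$ and $\{\eta(\theta(0))=1\}$; by the definition of $\nu$, applied with the index shift $k\mapsto-1$ (legitimate because $\nu$ is time-invariant, as noted in Theorem~\ref{thm:perf}), the first event has conditional probability $\nu(P,g)$.

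On $\{\eta(\theta(0))=0\}$ I would use only the monotonicity of the measurement update,
\[
P(0|0)=P-P\,C(0)^{T}\bigl(C(0)P\,C(0)^{T}+R(0)\bigr)^{-1}C(0)P\ \preceq\ P ,
\]
the subtracted term being positive semidefinite; this produces the contribution $\nu(P,g)\,\|A\|^{2}\tr P$.

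The substantive case is $\{\eta(\theta(0))=1\}$, where $\mathcal C(\theta(0))$ has full column rank and hence $\mathcal C^{\dagger}(\theta(0))\mathcal C(\theta(0))=I$. Here I would compare the Kalman posterior against the error covariance of the \emph{suboptimal} pseudoinverse reconstruction $\hat x_{\mathrm{ls}}=\mathcal C^{\dagger}(\theta(0))\,y(0)$, which discards the prior. Since $\hat x(0)=K(0)y(0)$ is the LMMSE estimate of $x(0)$ from $y(0)$, its error covariance $P(0|0)$ is dominated, in the positive-semidefinite order, by that of $\hat x_{\mathrm{ls}}$ (orthogonality principle); using $\mathcal C^{\dagger}(\theta(0))\mathcal C(\theta(0))=I$ the latter equals $\mathcal C^{\dagger}(\theta(0))\,\widetilde R(0)\,\mathcal C^{\dagger}(\theta(0))^{T}$, where $\widetilde R(0)\eq\mathrm{Cov}\!\bigl(y(0)-C(0)x(0)\bigr)$ is diagonal with $m$-th entry $\theta_{m}(0)(R_{m}+D_{m}(0))$ (so $\widetilde R(0)\preceq R(0)$ from~\eqref{eq:8}). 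Taking traces and using the cyclic property,
\begin{align*}
  \tr P(0|0) &\le \tr\bigl(\widetilde R(0)\,\mathcal C^{\dagger}(\theta(0))^{T}\mathcal C^{\dagger}(\theta(0))\bigr)\\
  &\le \|\mathcal C^{\dagger}(\theta(0))^{T}\mathcal C^{\dagger}(\theta(0))\|\,\tr\widetilde R(0)\ \le\ c\,\tr\widetilde R(0)
\end{align*}
by the definition of $c$ in~\eqref{eq:30}. Finally $\tr\widetilde R(0)=\sum_{m\,:\,\theta_m(0)=1}(R_{m}+D_{m}(0))\le\sum_{m=1}^{M}\bigl(R_{m}+\tfrac{\pi e}{6}\sigma_{y_{m}}^{2}2^{-2\check b_{m}}\bigr)$, the per-term bound on $D_{m}(0)$ following from~\eqref{eq:9} with the Gaussian differential entropy $h(y_{m})=\tfrac12\log_{2}(2\pi e\,\sigma_{y_{m}}^{2})$ and $b_{m}(0)\ge\check b_{m}$, see~\eqref{eq:4}. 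Multiplying by $\|A\|^{2}$ reproduces $\varpi c$ as in~\eqref{eq:30}, so this event contributes at most $(1-\nu(P,g))\,\varpi c$. Adding $\tr Q$ and the two conditional contributions yields the stated inequality.

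The step I expect to need the most care is the full-rank case. One must (i) identify correctly the covariance $\widetilde R(0)$ of $y(0)-C(0)x(0)$, accounting for the rows of~\eqref{eq:3} that are zeroed out when $\theta_{m}(0)=0$; (ii) invoke the positive-semidefinite—not merely scalar-MSE—optimality of the LMMSE estimate; and (iii) use the high-resolution relation~\eqref{eq:9} as a genuine upper bound on the distortions $D_{m}(0)$ of the received reconstructions, which is licensed by the paper's standing high-resolution assumption together with the finite rate constraint~\eqref{eq:4}. The remaining manipulations—the trace inequalities and matching the outcome to the definitions of $c$ and $\varpi$—are routine.
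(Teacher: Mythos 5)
Your proposal is correct and follows essentially the same route as the paper: a total-probability split on the full-rank event $\{\eta(\theta(0))=1\}$, the bound $\|A\|^2\tr P+\tr Q$ in the rank-deficient case, and, in the full-rank case, domination of the Kalman error by the suboptimal pseudo-inverse estimator followed by the same trace/spectral-norm inequalities and the high-resolution distortion bound that produce $\varpi c$. The only cosmetic differences are that you bound the posterior $P(0|0)$ and then propagate through $A(\cdot)A^{T}+Q$ (the paper works directly with the one-step predictor $\bar{x}(1)=AC^{\dagger}(0)y(0)$) and that you justify the rank-deficient case by monotonicity of the measurement update rather than by the paper's ``worst case $\theta_m(0)=0$'' argument.
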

\begin{proof}
We use the total probability
formula to write:
\begin{equation}
  \label{eq:20}
  \begin{split}
     \E&\big\{ V_{1} \,|\, Z(0) =(P,g)\big\} \\
     &=  \E\big\{ V_{1} \,|\, Z(0)=(P,g),\eta(0)=0 \big\} \nu(P,g)\\
     &\quad+\E\big\{ V_{1} \,|\, Z(0)=(P,g),\eta(0)=1 \big\} (1-\nu(P,g))
     \end{split}
\end{equation}
Following as in the proof of \cite[Lemma 2]{queahl13a},  for $\eta(0)=0$ we
consider the worst case, where $\theta_m(0)=0$, 
for all $m\in\{1,\dots,M\}$. This gives:
\begin{equation}
  \label{eq:23}
   \E\big\{ V_{1} \,|\, Z(0)=(P,g),\eta(0)=0 \big\}\leq \|A\|^2\tr P + \tr Q.
\end{equation} 
To study the case where $\eta(0)=1$, we 
consider 
the simple state
predictor $\bar{x}(k+1) = AC^\dagger(k) y(k)$, where $C^\dagger(k)
\eq (C(k)^TC(k))^{-1}C(k)^T$ is the pseudo-inverse of $C(k)$.
This estimator  yields the
estimation error $x(k+1) -
\bar{x}(k+1) = w(k) + AC^\dagger(k) v(k)$, thus,
\begin{equation*}
\begin {split}
\tr\bar{P}&(k+1|k) \eq \tr\E\{(x(k+1) -
\bar{x}(k+1))\\
&\qquad\qquad \qquad\qquad \times(x(k+1) -
\bar{x}(k+1))^T\}\\
&= \tr Q + \tr \big(
AC^\dagger(k)R(k)(C^\dagger(k))^{T}A^T\big)\\
&\leq \tr Q + ||A||^2 \tr
 \big(C^\dagger(k)R(k)(C^\dagger(k))^{T}\big)\\
&\leq \tr Q + ||A||^2  \|  (C^\dagger(k))^{T}C^\dagger(k)\| \tr
R(k)
\leq \varpi c+\tr Q,
\end{split}
\end{equation*}
where we have used \cite[Fact~5.12.7]{bernst09}. Since
the Kalman filter gives the minimum conditional estimation error among all
linear estimators  we obtain that 
\begin{equation}
  \label{eq:22}
  \E\big\{ V_{1} \,|\, Z(0)=(P,g),\eta(0)=1 \big\}\leq \varpi c+\tr Q
\end{equation}
The result follows by substitution of~(\ref{eq:22}) and~(\ref{eq:23})
into~(\ref{eq:20}). 
\end{proof}
To prove Theorem~\ref{thm:perf}, we use $V_k$  as a
candidate Lyapunov function. Lemma~\ref{lem:drift} and~(\ref{eq:13}) give that
\begin{equation*}
  \begin{split}
    0&\leq \E\big\{ V_{1} \,|\, Z(0)=(P,g) \big\} \leq \tr Q+(1-\nu(P,g)) \varpi
    c\\
    &+ \nu(P,g) 
  \|A\|^2 \tr P \leq\nu(P,g) \|A\|^2 V_0 +\bar{\beta} \leq \rho V_0
+\bar{\beta},
  \end{split}
\end{equation*}
 for all $(P,g)$, and where 
$\bar \beta \eq \tr Q+(1-\nu(P,g)) \varpi c \leq \tr Q+ \varpi c\in[0,\infty)$. Since
$\{Z(k)\}_{k\in\N_0}$ is Markovian, we can use \cite[Prop.\ 3.2]{meyn89} (see
also\cite{queahl13a}) to
conclude that
\begin{equation}
    \label{eq:39b}
    0\leq \E \big\{ V_k   \,|\, Z(0)=Z \big\} \leq \rho^k V_0 + \bar{\beta}
    \sum_{i=0}^{k-1}\rho^i= \rho^k V_0 +\bar{\beta} \frac{1-\rho^k}{1-\rho},
  \end{equation}
 for all $k\in\N_0$.  Since $P(k|k-1)\succeq 0$,
  it holds that $ V_k\geq \|P(k|k-1)\|$, for all $ k\in
      \N_0$. Thus, upon noting that $P(0|-1)=P_0$ is
  given,~\eqref{eq:39b} establishes~(\ref{eq:40}).
\hfs

\bibliography{/Users/daniel/Dropbox/dquevedo}

\end{document}